\newtheorem{theorem}{Theorem}[section]
\newtheorem{proposition}[theorem]{Proposition}
\newtheorem{assumption}[theorem]{Assumption}
\def\rhob{\boldsymbol{\rho}}
\def\thetab{\boldsymbol{\theta}}
\def\ab{\boldsymbol{a}}
\def\bb{\boldsymbol{b}}
\def\xb{\boldsymbol{x}}
\def\zb{\boldsymbol{z}}
\def\Gb{\boldsymbol{G}}
\def\Wb{\boldsymbol{W}}
\def\Xb{\boldsymbol{X}}
\def\grad{\nabla}
\def\lap{\Delta}
\def\lap{\Delta}
\def\RR{\mathbb{R}} \def\NN{\mathbb{N}} 
\def\EE{\mathbb{E}}
\def\<{\langle} \def\>{\rangle}
\DeclareRobustCommand{\tr}{\operatorname*{tr}}
\definecolor{green}{HTML}{0f6852}
\definecolor{lightblue}{HTML}{01abe9}
\begin{document}

\title{Active Importance Sampling for Variational Objectives Dominated by Rare Events: Consequences for Optimization and Generalization}

\author{Grant M. Rotskoff}
\affiliation{Dept. of Chemistry, Stanford University, Stanford, CA 94305}
\email{rotskoff@stanford.edu}
\author{Andrew R. Mitchell}
\affiliation{Dept. of Chemistry, Stanford University, Stanford, CA 94305}
\author{Eric Vanden-Eijnden}
\affiliation{Courant Institute, New York University, New York, NY 10012}
\email{eve2@cims.nyu.edu}

\begin{abstract}
 Deep neural networks, when optimized with sufficient data, provide accurate representations of high-dimensional functions; in contrast, function approximation techniques that have predominated in scientific computing do not scale well with dimensionality.
 As a result, many high-dimensional sampling and approximation problems once thought intractable are being revisited through the lens of machine learning.
 While the promise of unparalleled accuracy may suggest a renaissance for applications that require parameterizing representations of complex systems, in many applications gathering sufficient data to develop such a representation remains a significant challenge.
 Here we introduce an approach that combines rare events sampling techniques with neural network optimization to optimize objective functions that are dominated by rare events.
 We show that importance sampling reduces the asymptotic variance of the solution to a learning problem, suggesting benefits for generalization.
 We study our algorithm in the context of solving high-dimensional PDEs that admit a variational formulation, a problem with applications in statistical physics and implications in machine learning theory.
 Our numerical experiments demonstrate that we can successfully learn even with the compounding difficulties of high-dimension and rare data.
\end{abstract}

\maketitle


\section{Introduction}

Deep neural networks (DNNs) have become an essential tool for a diverse set of problems in data science and, increasingly, the physical sciences~\cite{carleo_machine_2019}.
The uncommonly robust approximation properties of DNNs undergird the successes of deep learning in seemingly disparate problems~\cite{lecun_deep_2015}. 
The power of approaches based on deep learning is evident in high-dimensional settings where most classical tools from numerical analysis break down, due to the curse of dimensionality~\cite{donoho1989}.
Many compelling questions in statistical physics require precise knowledge of high-dimensional functions, objects which can be challenging to represent and compute, suggesting that machine learning may have a transformative role to play.

Of course, challenges arise when using machine learning techniques in the physical sciences that do not appear in conventional settings.
Unlike in computer vision and natural language processing, curated data sets are not typically available for physical problems that we intend to solve \emph{de novo}. 
As a result, we must generate the data either experimentally or computationally that we use to train our models.

Of particular interest in this context are problems involving high-dimensional partial differential equations (PDE) that can be formulated as variational minimization problems. Many PDEs of interest in statistical mechanics and quantum mechanics admit such a variational principle, and they lend themselves naturally to solution by machine learning techniques since the objective function can serve as a loss to train a neural network used to represent the solution. How to generate data to evaluate this objective constitutes, perhaps, the core challenge in problems of this type because the data that dominates the objective may be rare if sampled naively. In this work, we address this sampling problem.

\textbf{Neural networks for variational PDEs.}---Consider a PDE whose  solution can be found via the minimization problem
\begin{equation}
    \min_{f\in \mathcal{F}} \mathcal{I}(f)
    \label{eq:ritz1}
\end{equation}
Here 
\begin{equation}
    \mathcal{I}(f) = \int_\Omega \mathcal{L}(\xb,f) d\nu(\xb),
    \label{eq:ritz12}
\end{equation}
where $\Omega \subset \RR^d$, $\nu$ is some positive measure, and $\mathcal{L}(\xb,f)$ is some Lagrangian depending on $\xb$ as well as $f$ and its derivatives: Typical examples are
\begin{equation}
    \mathcal{L}(\xb,f) = \tfrac12 |\grad f(\xb)|^2 + V(\xb) |f(\xb)|^2, \qquad d\nu(\xb) = d\xb,
    \label{eq:ritz2}
\end{equation}
where $V:\Omega \to \RR$ is some potential, which gives the time-independent Schr\"odinger equation if we impose $\int_\Omega |f(\xb)|^2 d\xb = 1$, or 
\begin{equation}
    \mathcal{L}(\xb,f) = \tfrac12 |\grad f(\xb)|^2, \qquad d\nu(\xb) = e^{-\beta V(\xb)}d\xb \qquad (\beta>0)
    \label{eq:ritz2b}
\end{equation}
which gives the time-independent backward Kolomogorov equation if we impose some boundary conditions.

Variational Monte Carlo (VMC) procedures~\cite{hoggan_chapter_2016} have been used to compute solutions to PDEs that admit this formulation. In this context, solutions are often computed using the Ritz method, which essentially amounts to optimizing the weights of specified, hand-chosen basis elements.  Methods based on neural networks~\cite{eigel_variational_2019,e_deep_2017} offer an alternative to VMC which may need less \textit{a~priori} information about the solution by relying on the approximation power of these networks. 

\begin{figure}
 \centering
 \includegraphics[width=0.48\linewidth]{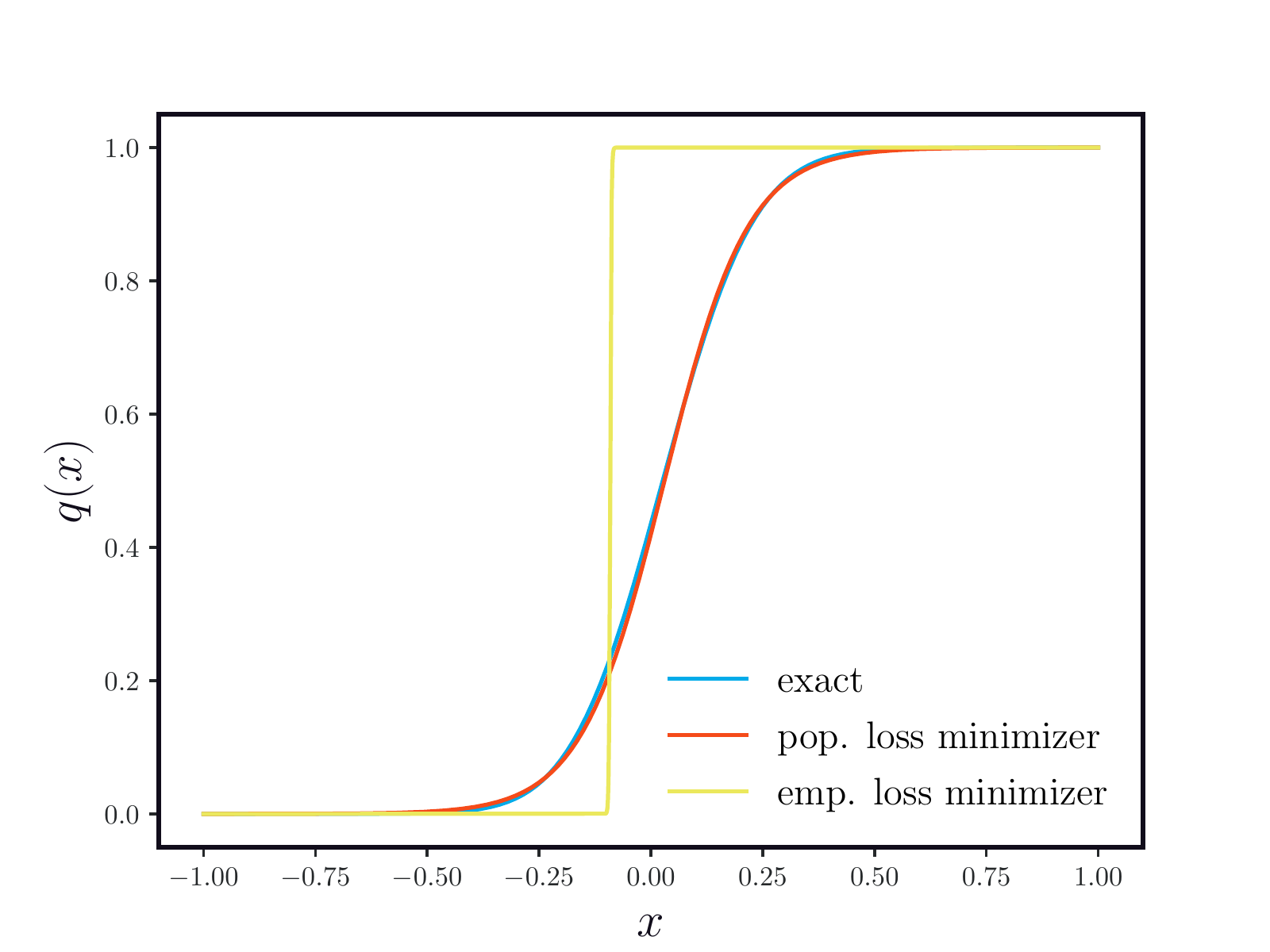}
 \includegraphics[width=0.48\linewidth]{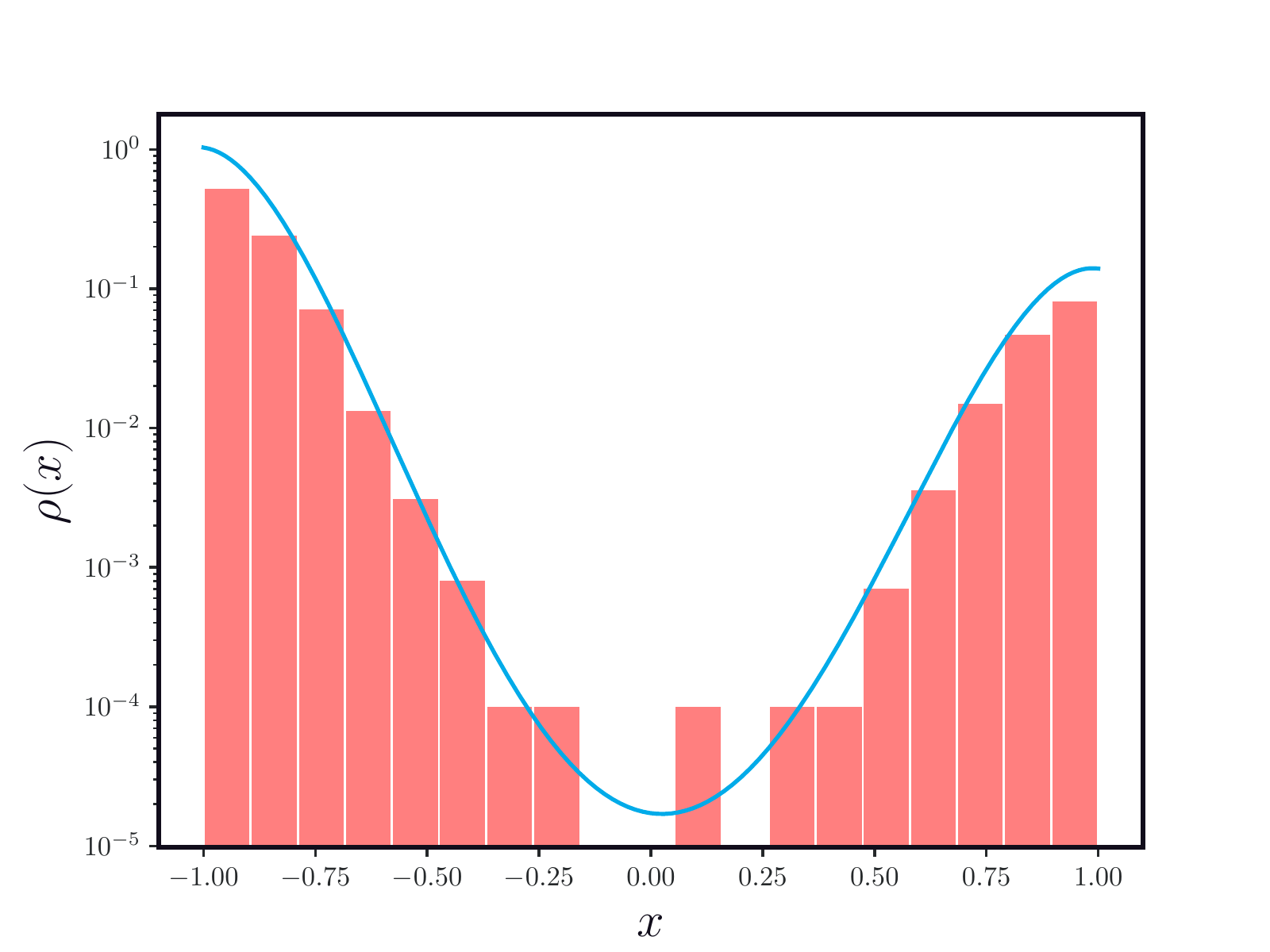}
 \includegraphics[width=0.48\linewidth]{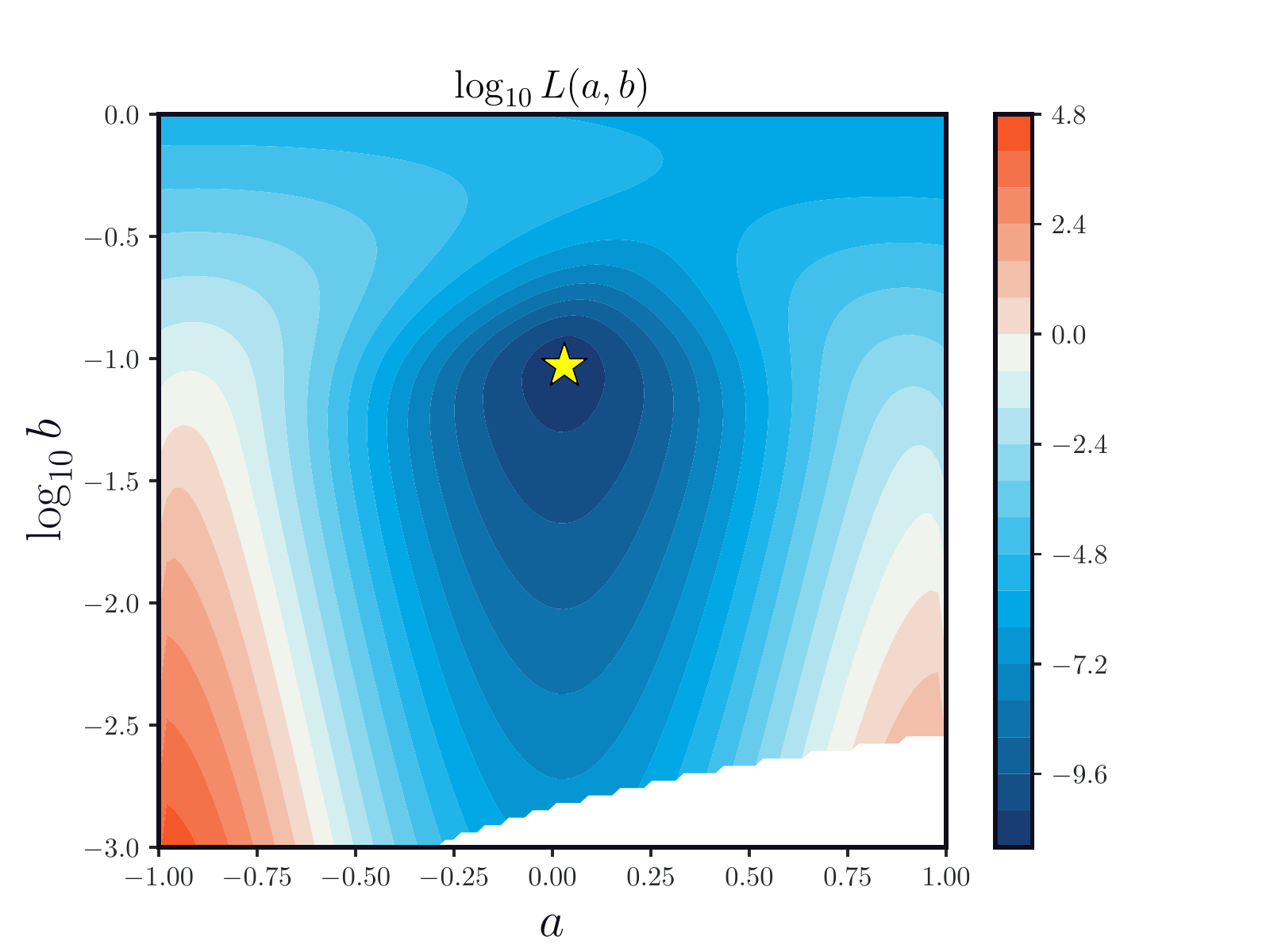}
 \includegraphics[width=0.48\linewidth]{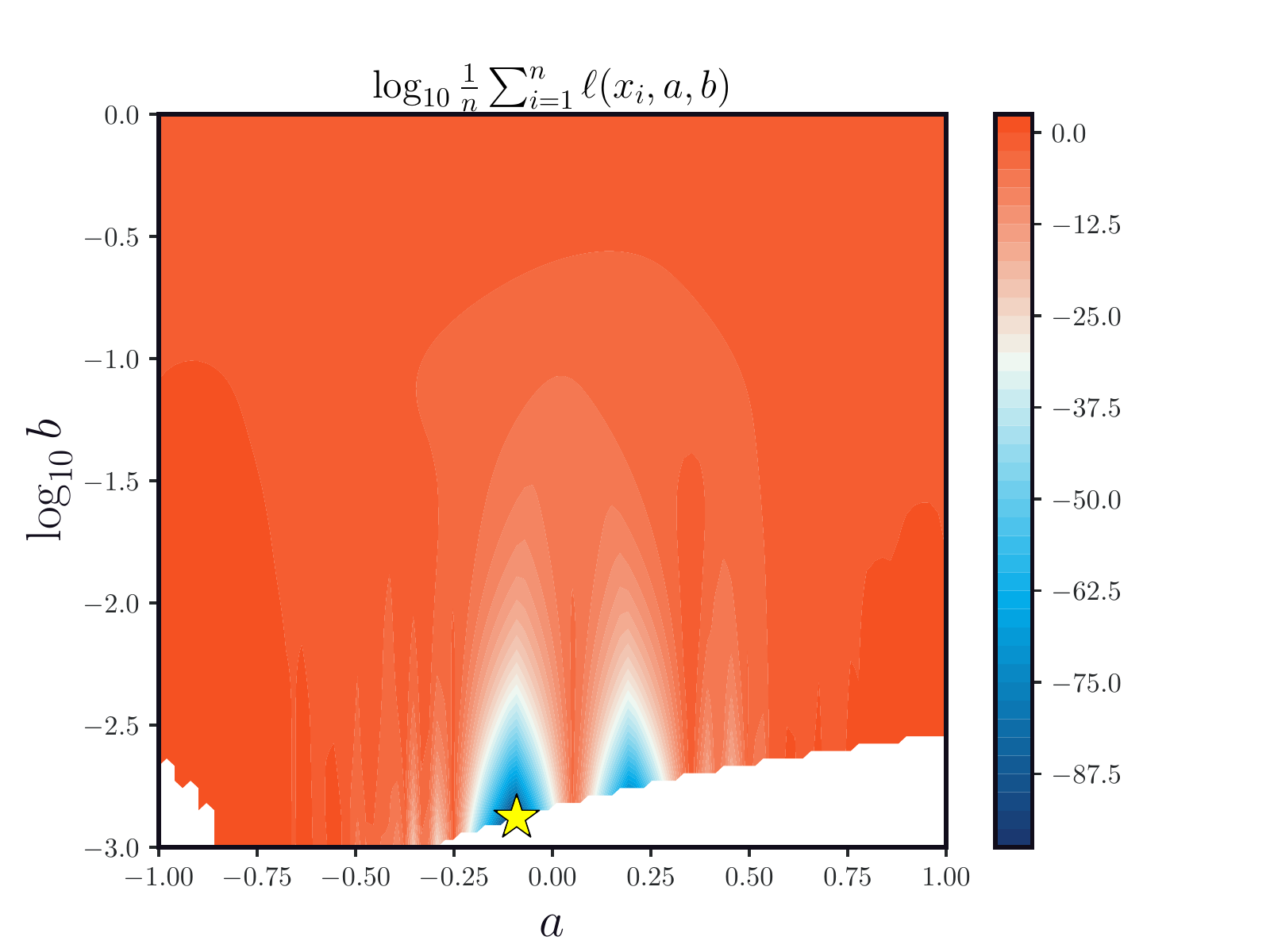}
 \caption{
 Simple illustrative example where the objective $I(q)= \int_{x_1}^{x_2} |q'(x)|^2 e^{-\beta V(x)} dx$, with $V(x) = (1-x^2)^2 + x/10$, $\beta = 10$, and $x_1$, $x_2$ at the minima of $V(x)$, is minimized over sigmoid functions with two parameters $a$ and $b$, controlling respectively their location and width -- in this example, functions of this type do a good job at capturing the minimizer if $a$ and $b$ are properly are adjusted. Top left panel: the minimizers of the population and the empirical losses are compared to the actual solution; top right panel: the histogram of the data acquired by drawing $10^4$ independent samples  from the Gibbs distribution restricted on $x\in[x_1,x_2]$ is compared to the exact density; bottom panels: the population loss (left) is compared to the empirical loss (right) estimated on the data. Because the data is somewhat sparse near the maximum of $V(x)$ that dominates the objective, the empirical loss does a bad job at capturing the features of the population loss -- note in particular the different scale and the added ruggedness in the empirical loss. As a result, the function optimized over this empirical loss differs significantly from the minimizer of the population loss that approximates the exact solution well, leading to a large generalization error. Note that here we sampled the measure and identified the minimizers by brute-force: in more complicated situations there is the added difficulty of performing this sampling, and minimizing the empirical loss by SGD.  For further details on this example, see Appendix~\ref{sec:one:d}.
 }
 \label{fig:losscomm1d}
\end{figure}

\textbf{Data acquisition and importance sampling.}---Training a neural network to represent the solution of the PDE by minimizing~\eqref{eq:ritz1} requires estimating the integral~\eqref{eq:ritz12}. 
Because there is no data set given beforehand, the most straightforward implementation samples data points on~$\Omega$ from the measure~$\nu$ properly normalized. 
While natural, this approach is by no means optimal and it could even fail if the expectation of $\mathcal{L}(\xb,f)$ is dominated by events that are rare on $\nu$: a simple  example illustrating this point is shown in Fig.~\ref{fig:losscomm1d}. If $\Omega$ is high dimensional, the variance of a simple estimator using unbiased samples from $\nu$ will typically be large compared to its mean squared, and some form of importance sampling will therefore be required.
If we were interested in estimating the loss $I(f)$, it is well-known that the optimal way to draw samples would be to use the reweighted measure $d\tilde \nu(\xb) = I^{-1}(f) |\mathcal{L}(\xb,f)| d\nu(\xb)$ and reweight the samples consistently using $I(f) |\mathcal{L}(\xb,f)|^{-1}$ ~\cite{awad_zero-variance_2013}. 
The difficulties with this approach are that the reweighted measure $\tilde \nu$ may not be easy to sample, and the reweighting factor involves the unknown value $I(f)$. 

We show below that an importance sampling strategy can, however, be applied to reduce the variance of the estimator for the loss (as our training procedure relies on data generated at every training step) associated with variational problems of the type \eqref{eq:ritz1}.
These methods are widely used in applications from statistical mechanics because offer a remedy to the problem of an objective dominated by rare data, but they are often rendered intractable by the need for precise knowledge about where and how to sample to design an importance sampling scheme. 
In our context, however,  we can use the current estimate of the solution to inform and enhance the sampling in regions of the domain that contribute to the objective. 
The efficiency of such \emph{active} learning approaches will be demonstrated below.

\textbf{Reactive events and committor.}---As a specific application of practical interest that illustrates the general issues outlined above, we will focus on optimizing an objective of type~\eqref{eq:ritz2b} for a target function known as the ``committor function,'' or committor in short. 
The committor is useful to identify reaction pathways and sample reactive trajectories in problems displaying metastability, a central question in statistical mechanics with decades of work behind it. 
In this context, the committor describes the probability that a configuration will ``react'', by transiting from one metastable basin to another under the stochastic dynamics of the system under consideration.
Parameterizing the committor accurately (as many functions related to rare transitions in applications in condensed matter physics) requires samples from configurations that are rare under the Boltzmann distribution, a fact emphasized by the ubiquity of importance sampling methods for free energy calculations.
With this in mind, calculating the committor epitomizes why a naive sampling strategy will not succeed in general and importance sampling is necessary.

\textbf{Related works.}---Importance sampling and other variance reduction techniques have appeared in a variety of contexts in machine learning.
\citet{csiba_importance_2018} described and analyzed an algorithm that does importance sampling of the training set to adaptively select minibatches and accelerate gradient descent.
Their work formalizes an approach, represented in a large body of work \cite{nesterov_efficiency_2012,NIPS2012_4633, NIPS2013_4937}, that aims to reduce the variance in the gradients when optimizing using stochastic gradient descent.
In a separate line of inquiry, \citet{fan_importance_2010} uses importance sampling to perform approximate Bayesian inference in continuous time Bayesian networks. 
Our setting differs substantially from these works, as we are principally concerned with problems in which the data set is sampled on-the-fly from a Boltzmann distribution.
Furthermore, we require importance sampling for the learning to be tractable at all, whereas the aforementioned works seek to accelerate optimization in otherwise tractable learning problems.
Our theoretical results suggest that these previously studied approaches benefit generalization.

Our work parallels a line of inquiry in the Quantum Monte Carlo literature which has demonstrated the utility of neural network ansatze for electronic structure problems~\cite{han_solving_2019, hermann_deep-neural-network_2020, pfau_ab_2020}.
Though the physical setting is quite different from the one we consider here, these works also rely on a strategy in which the data is collected online and there is feedback between training and data collection.
Regarding the application to metastability, transition path sampling methods are perhaps the most closely related to our approach~\cite{bolhuis_transition_2002, maragliano_string_2006}.
Our applications are heavily influenced by the perspective of potential theory~\cite{bovier_metastability_2002} and transition path theory~\cite{e_towards_2006,e_transition-path_2010}, which use the notion of the committor function (discussed in detail below) to characterize metastability.
\citet{khoo_solving_2018} first considered the problem of learning committor functions from the perspective of solving high-dimensional PDEs but did not address the sampling issues that can arise in computing the objective.
Our approach most closely follows that of \citet{li_computing_2019}, who also examined the problem of optimizing a representation of the committor using neural networks on low-dimensional landscapes.
Our work extends this approach in several important ways: first, our algorithm uses an \emph{active} approach---the importance sampling directly uses the committor function meaning that there is feedback between the optimization and the data collection.
In high-dimensional systems in which selecting a reaction coordinate presents a challenging design problem, our approach is crucial for effective sampling because we avoid explicitly constructing a reaction coordinate.

\textbf{Main contributions.}---First, under very general assumptions, we show that importance sampling asymptotically improves the generalization error.
Next, we describe an algorithm for \emph{active} importance sampling that enables variance reduction for the estimator of the loss function, even in high-dimensional settings.
Finally, we demonstrate numerically that this algorithm performs well both on low and high-dimensional examples and that, even when the total amount of data is fixed, optimizing the variational objective fails when importance sampling is not used.

\section{Online learning and generalization error}
\label{sec:online:gen}

Suppose we parametrize the function $f$ entering the objective function in~\eqref{eq:ritz1} using a neural network, i.e. we set $f(\xb) = f(\xb,\thetab)$, where $f(\cdot,\thetab)$ is the network output and $\thetab\in \RR^N$ collectively denotes all the parameters entering this network. This turns~\eqref{eq:ritz1} into an objective function for the parameters~$\thetab$:
\begin{equation}
    \label{eq:poprisk}
    L(\thetab) = I(f(\cdot,\thetab)) = \int_\Omega \ell(\xb,\thetab) d\nu(\xb)
\end{equation}
where
\begin{equation}
    \label{eq:lagtheta}
    \ell(\xb,\thetab) = \mathcal{L}(\xb,f(\cdot,\thetab))
\end{equation}
In the jargon of machine learning, $L(\thetab)$ is called the population loss or risk, and in practice, it must be estimated using an empirical estimate. 
The simplest choice for the empirical loss is 
\begin{equation}
    \label{eq:emprisk}
    L_n(\thetab) = \frac1n \sum_{i=1}^n \ell(\xb_i,\thetab) 
\end{equation}
where $\{\xb_i\}_{i=1}^n$ are \textit{iid} drawn from $\nu$. This offers the possibility to optimize the parameters $\thetab$ by gradient descent (GD), i.e. using
\begin{equation}
    \label{eq:sgd}
    \thetab^{k+1} = \thetab^k - \alpha \nabla_{\thetab} L_n(\thetab^k), \qquad k=0,1, 2, \ldots
\end{equation}
where $\thetab^k$ denote the successive updates of the parameters starting from some initial $\thetab^0$ and $\alpha>0$ is some time step (learning rate). In situations in which no data set is available beforehand, it is customary to use online learning, i.e. to generate new independent batches of data $\{\xb_i\}_{i=1}^n$ after each (or a few) step(s) of the GD update. In~\eqref{eq:sgd}: each $\{\xb_i\}_{i=1}^n$ is called a minibatch, and the update in~\eqref{eq:sgd} is the widely used stochastic gradient descent (SGD) algorithm.

In this setup, the main issue becomes how to assess the quality of an approximation of the minimizer(s) of the population risk that we obtain using SGD.
To phrase this question more precisely, let us denote by $\{\bar\thetab^k\}_{k\in \NN_0}$ the successive update of the parameters by GD over the population risk, i.e.
\begin{equation}
    \label{eq:gd}
    \bar \thetab^{k+1} = \bar \thetab^k - \alpha \nabla_{\thetab} L(\bar\thetab^k), \qquad k=0,1, 2, \ldots
\end{equation}
Let us assume that:
\begin{assumption}
\label{as:1}
Given some initial value $\bar\thetab^0$, the GD update in~\eqref{eq:gd} converges towards a local minimizer of the population risk, $\thetab^*= \lim_{k\to\infty} \bar\thetab^k$, satisfying
\begin{equation}
    \label{eq:min:prop}
    \nabla_{\thetab} L(\thetab^*) = 0, \qquad H^* = \nabla_{\thetab} \nabla_{\thetab} L(\thetab^*) \quad \text{is positive-definite}
\end{equation}
\end{assumption}
Note that this assumption does not specify how good the local minimizer $\thetab^*$ is (i.e. how close $L(\thetab^*)$ is from $\min_{\thetab} L(\thetab)$) but it requires that $L(\thetab)$ be strictly convex in the vicinity of $\thetab^*$.
Similar assumptions have been used to study SGD as variational inference~\cite{mandt_stochastic_2017}.
This assumption implies:

\begin{proposition}
\label{th:1a}
The sequence $\{\thetab^k\}_{k\in \NN_0}$ obtained using the SGD update in~\eqref{eq:sgd} starting from $\thetab^0 = \bar\thetab^0$ and using an independent batch of data $\{\xb_i\}_{i=1}^n$ drawn from $\nu$ at every step is such that
\begin{equation}
\label{eq:generr1}
    \lim_{k\to\infty} \lim_{n\to\infty} n \EE_D [L( \thetab^k) - L(\thetab^*)] = \tfrac12\alpha \tr [C^* H^*]
\end{equation} 
Here $\EE_D$ denotes expectation over all the batches used to compute the sequence $\thetab^k$, and $C^*$ is the $N\times N$ tensor that solves
\begin{equation}
    \label{eq:C}
    H^* C^* + C^* H^* - \alpha C^* H^* C^* = B^*
\end{equation}
where $B^*$ is the covariance of $\nabla_{\thetab} \ell(\xb,\thetab^*)$  (using  $\nabla L(\thetab^*) =0$)
\begin{equation}
    \label{eq:B}
    B^* = \int_\Omega \nabla_{\thetab} \ell(\xb,\thetab^*) [\nabla_{\thetab} \ell(\xb,\thetab^*)]^T d\nu(\xb)
\end{equation}
\end{proposition}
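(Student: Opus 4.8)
The plan is to study the SGD recursion~\eqref{eq:sgd} as an $O(n^{-1/2})$ stochastic perturbation of the deterministic GD recursion~\eqref{eq:gd} in a neighborhood of $\thetab^*$, and to identify the stationary law of the rescaled fluctuations. Write $\nabla_\thetab L_n(\thetab) = \nabla_\thetab L(\thetab) + \boldsymbol{\xi}_n(\thetab)$, where conditionally on $\thetab$ the minibatch noise $\boldsymbol{\xi}_n(\thetab) := \nabla_\thetab L_n(\thetab) - \nabla_\thetab L(\thetab)$ has mean zero and covariance $n^{-1}B(\thetab)$, with $B(\thetab) = \int_\Omega \nabla_\thetab\ell(\xb,\thetab)[\nabla_\thetab\ell(\xb,\thetab)]^T d\nu(\xb) - \nabla_\thetab L(\thetab)[\nabla_\thetab L(\thetab)]^T$; crucially, since a fresh batch is drawn at every step, $\boldsymbol{\xi}_n(\thetab^k)$ is independent of $\thetab^k$. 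Setting $\boldsymbol{u}^k := \thetab^k - \thetab^*$, Taylor-expanding $\nabla_\thetab L$ about $\thetab^*$ and using $\nabla_\thetab L(\thetab^*)=0$ and $\nabla_\thetab\nabla_\thetab L(\thetab^*)=H^*$ from~\eqref{eq:min:prop}, the update becomes
\[
  \boldsymbol{u}^{k+1} = M\,\boldsymbol{u}^k - \alpha\,\boldsymbol{\xi}_n(\thetab^k) + \boldsymbol{R}^k, \qquad M := I - \alpha H^*,
\]
with Taylor remainder $\boldsymbol{R}^k = O(\alpha|\boldsymbol{u}^k|^2)$. Since $H^*$ is positive-definite, for $\alpha$ small enough -- consistent with the convergence of GD assumed in Assumption~\ref{as:1} -- $\|M\|<1$, so the linear part is a strict contraction.

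Next I would track the rescaled covariance $C^k_n := n\,\EE_D[\boldsymbol{u}^k(\boldsymbol{u}^k)^T]$. Squaring the update and taking $\EE_D$, conditioning at each step on the past so that $\EE[\boldsymbol{\xi}_n(\thetab^k)\mid\thetab^k]=0$ annihilates the cross terms while $n\,\EE[\boldsymbol{\xi}_n(\thetab^k)\boldsymbol{\xi}_n(\thetab^k)^T\mid\thetab^k] = B(\thetab^k)$, one gets a recursion
\[
  C^{k+1}_n = M\,C^k_n\,M^T + \alpha^2\,\EE_D[B(\thetab^k)] + \mathcal{E}^k_n ,
\]
with $\mathcal{E}^k_n$ collecting $n$ times the expectations of the $\boldsymbol{R}^k$-terms. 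As $k\to\infty$ the iterates settle into a stationary regime concentrated near $\thetab^*$, where $\EE_D[B(\thetab^k)]\to B^* := B(\thetab^*)$ -- which equals~\eqref{eq:B} since $\nabla_\thetab L(\thetab^*)=0$ -- and, once $n$ is large, $\mathcal{E}^k_n$ is negligible; then $C^k_n$ converges to the solution $C^*$ of $C^* = M C^* M^T + \alpha^2 B^*$. Expanding $M C^* M^T = C^* - \alpha(H^*C^*+C^*H^*) + \alpha^2 H^*C^*H^*$ and rearranging turns this into an algebraic Riccati (Lyapunov-type) equation of the form~\eqref{eq:C}. Finally, by the same quadratic expansion, $L(\thetab^k)-L(\thetab^*) = \tfrac12(\boldsymbol{u}^k)^T H^*\boldsymbol{u}^k + O(|\boldsymbol{u}^k|^3)$, so $n\,\EE_D[L(\thetab^k)-L(\thetab^*)] = \tfrac12\tr[H^*C^k_n] + n\,\EE_D[O(|\boldsymbol{u}^k|^3)]$; the remainder is $O(n^{-1/2})$ once $\EE_D|\boldsymbol{u}^k|^p = O(n^{-p/2})$, and passing to the limit yields $\tfrac12\tr[H^*C^*]$, which is~\eqref{eq:generr1} up to the normalization convention for $C^*$ fixed in~\eqref{eq:C}.

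Two estimates carry the real weight. First, because the linearization is valid only in a ball around $\thetab^*$, one needs a confinement estimate: the SGD chain started from $\bar\thetab^0$ (whose GD flow reaches $\thetab^*$) must stay in that ball with probability $1-o(1/n)$ for all large $k$, with rare excursions contributing negligibly. The natural tool is a drift inequality for the Lyapunov function $\thetab\mapsto L(\thetab)-L(\thetab^*)$: a short computation gives $\EE_D[L(\thetab^{k+1})-L(\thetab^*)\mid\thetab^k] \le (1-c\alpha)(L(\thetab^k)-L(\thetab^*)) + c'\alpha^2/n$ near $\thetab^*$, which both confines the chain and shows that after the transient $\EE_D[L(\thetab^k)-L(\thetab^*)] = O(1/n)$, making $n\,\EE_D[\cdot]$ bounded in the regime of interest. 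Coupled to this one needs the uniform-in-$k$ moment bounds $\sup_k n^{p/2}\,\EE_D|\boldsymbol{u}^k|^p < \infty$ for $p=2,3,4$, proved by a self-consistent induction playing the contraction $\|M\|<1$ against the $O(\alpha^2/n)$ noise variance and the size of $\boldsymbol{R}^k$, using local boundedness of $B(\thetab)$ and of the third derivatives of $L$ and enough moments of $\nabla_\thetab\ell(\xb,\thetab^*)$. Second, one must make the stationary regime precise -- geometric ergodicity of the nonlinear SGD chain near $\thetab^*$ toward a stationary measure $\pi_n$, convergence of ergodic averages $\EE_D[g(\thetab^k)]\to\EE_{\pi_n}[g]$, and the fact that the stationary covariance equals $\tfrac1n C^* + o(1/n)$, i.e.~the nonlinear corrections are subdominant. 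These two inputs -- confinement/moment control, and the identification of the stationary covariance with its $o(1/n)$ error -- are the crux; granting them, the remainder is the elementary spectral analysis of $C^* = MC^*M^T + \alpha^2 B^*$ (well-posed because $\|M\|<1$) together with the Taylor expansion of $L$. An alternative is to replace the discrete chain by its Ornstein--Uhlenbeck diffusion approximation in the spirit of~\cite{mandt_stochastic_2017}, but quantifying that approximation is no easier, so I would stay with the discrete recursion.
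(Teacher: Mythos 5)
Your route is viable in outline but it is genuinely different from the paper's, and it is doing strictly more work than the statement requires. The paper does \emph{not} center the fluctuations on $\thetab^*$: it sets $\tilde\thetab_n^k=\sqrt{n/\alpha}\,(\thetab^k-\bar\thetab^k)$ with $\bar\thetab^k$ the \emph{GD trajectory} of~\eqref{eq:gd}, and exploits the order of limits in~\eqref{eq:generr1} — $n\to\infty$ first, at fixed $k$. In that limit the Taylor remainders vanish for free and the rescaled fluctuation satisfies an exactly linear, time-inhomogeneous recursion $\tilde\thetab^{k+1}=(I-\alpha H^k)\tilde\thetab^k+\sqrt{\alpha}\,\bb^k$ with $H^k$, $B^k$ evaluated along the GD path; the covariance recursion is then deterministic, and the outer limit $k\to\infty$ only uses $\bar\thetab^k\to\thetab^*$ from Assumption~\ref{as:1} together with continuity of $H$ and $B$. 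All of the machinery you correctly identify as ``carrying the real weight'' — uniform-in-$k$ confinement, $\sup_k n^{p/2}\EE_D|\thetab^k-\thetab^*|^p<\infty$, geometric ergodicity toward a stationary measure $\pi_n$ — belongs to the harder statement with the limits taken in the opposite order (long-time behavior of the finite-$n$ chain), and is simply not needed here. Moreover, as written your program has an unaddressed transient problem: the contraction $M=I-\alpha H^*$ and the drift inequality for $L-L(\thetab^*)$ are only valid in the neighborhood of $\thetab^*$ where $L$ is strictly convex, whereas the chain starts at $\bar\thetab^0$, which Assumption~\ref{as:1} allows to be far from $\thetab^*$ and to reach it through non-convex territory; getting the chain into that neighborhood with high probability is exactly the step that the paper's centering on $\bar\thetab^k$ and $n\to\infty$-first ordering dissolves. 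What your approach would buy, if completed, is a strictly stronger result (a genuine finite-$n$ stationary covariance, uniform in $k$), but for the proposition as stated it is the long way around.

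One point where your derivation is actually a useful check on the paper: both your stationary equation $C^*=MC^*M^T+\alpha^2B^*$ and the paper's own covariance recursion lead, after rearranging and normalizing, to $H^*C^*+C^*H^*-\alpha H^*C^*H^*=B^*$; the term $\alpha\,C^*H^*C^*$ displayed in~\eqref{eq:C} appears to be a typo for $\alpha\,H^*C^*H^*$.
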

The proof of this proposition is given in Appendix~\ref{app:generr}. Essentially, it amounts to linearizing the sequence $\{\thetab^k\}_{k\in \NN_0}$ from SGD around $\{\bar\thetab^k\}_{k\in \NN_0}$ from GD: the resulting sequence is the discretized version of an Ornstein-Ulhenbeck process that can be analyzed exactly.

Even though the statement in~\eqref{eq:generr1} is only asymptotic in $n$ and $k$, it suggests that for large $k$ and large $n$, we will have
\begin{equation}
\label{eq:generr2}
     \EE_D L(\thetab^k) = L(\thetab^*) + \tfrac12 n^{-1} \alpha \tr [C^* H^*] + \text{higher order corrections in $n$}
\end{equation} 
Therefore, if we can guarantee that $\thetab^*$ is a good local minimizer of the loss (which has to do with the choice of network architecture and how well-tailored it is to the problem at hand, the choice of the initial $\bar\thetab^0$, etc.), \eqref{eq:generr1} indicates that the error made by learning using SGD rather than GD can be controlled by: (i) increasing the size $n$ of the batches, (ii) decreasing the learning rate $\alpha$, or (iii) reducing $\tr [C^* H^*]$. The first two observations are standard and are at the core of the Robbins-Monro stochastic approximation procedure~\cite{robbins1951}. The third observation is also not surprising from the proof of Proposition~\ref{th:1a} which shows that  $n^{-1} \alpha C^*$ is asymptotic covariance of the update $\thetab^k$ from the SGD sequence around its mean $\bar\thetab^k$. 

Interestingly, reducing $\tr [C^* H^*]$ essentially amounts to reducing   $\tr B^*$ with $B^*$ defined in~\eqref{eq:B}. Indeed, to leading order in $\alpha$, we see from~\eqref{eq:C} that
\begin{equation}
    \tr[C^* H^*] = \tfrac12 \tr B^* + O(\alpha)
\end{equation}
Reducing $\tr B^*$ is precisely what we show how to do next using importance sampling.

\section{Active learning with umbrella sampling and replica exchange}
\label{sec:active2}

\begin{figure}
 \centering
 \includegraphics[width=\linewidth]{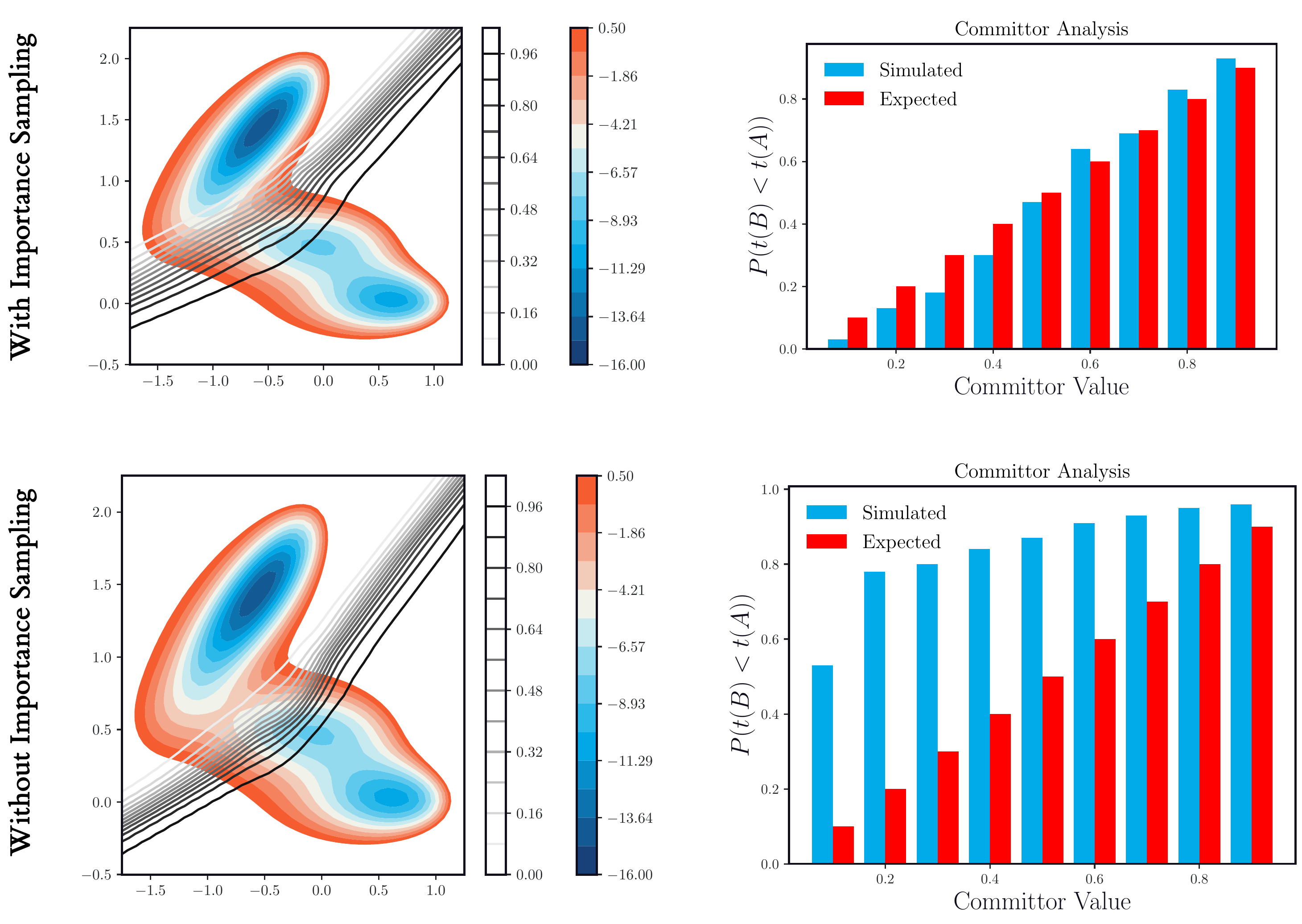}
 
 \caption{
 Simple illustrative experiment; the potential energy function is a 2D mixture of Gaussians and the committor is represented as a single hidden layer neural network. Top (results using our active importance sampling algorithm): Filled contours illustrate the M\"uller-Brown potential \eqref{eq:v_mb} with the isocommittor lines are shown from white to black. Notably, level set $q=0.5$ coincides with the saddle, as expected. Right: We verify that this solution is consistent with the expected committor function by sampling 100 configurations for each window $0.1, 0.2, \dots$ and running Langevin trajectories to compute estimate the committor probability. The fraction of trajectories reaching $B$ before $A$ is close to the expected value. 
 Bottom: Without importance sampling, the optimization converges to a representation of the committor $q$ with poor performance. Left: The contours of the committor fail to localize to the transition region. Right: The committor analysis shows that without importance sampling, the results deviate strongly from the expected probabilities. 
 }
 \label{fig:mb}
\end{figure}

To reduce the variance of the estimator for $L$, we will use an importance
sampling strategy that combines umbrella sampling~\cite{torrie1977} (cf. stratification~\cite{dinner2020}) with replica exchange~\cite{swendsen1986,fukunishi_hamiltonian_2002}. The first method uses windowing functions to enhance the sampling in otherwise rarely sampled regions of the data distribution; the second allows for exchange between these windows to accelerate sampling even further. Both these methods are widely used: the novelty lies in the way we actively define the windowing functions using the current estimate of the target function $f$ by its network representation $f(\cdot, \thetab)$.
In Appendix~\ref{sec:active1} we discuss an active importance sampling scheme based on direct reweighting which could be used to reduce the variance in the estimate of the gradient of the loss. 

Let us denote these windowing functions as a set of non-negative functions
$W_l(\xb)\ge0$ with $l=1,\ldots, L$ such that
\begin{equation}
 \label{eq:2}
 \forall \xb \in \Omega \quad : \quad \sum_{l=1}^L W_l(\xb) = 1,
\end{equation}
Given any test function $\phi:\Omega \to \RR$, we can write
\begin{equation}
 \label{eq:3}
  \EE_\nu \phi = \sum_{l=1}^L \int_{\RR^d} \phi(\xb) W_l(\xb)
  d\nu(\xb)  \equiv \sum_{l=1}^L w_l \, \EE_l \phi 
\end{equation}
where we defined the expectation
\begin{equation}
 \label{eq:4}
 \EE_l\phi  = Z_l^{-1} \int_{\RR^d} \phi(\xb) W_l(\xb)
 d\nu(\xb) \quad \text{where} \quad Z_l = \int_{\RR^d} W_l(\xb)
 d\nu(\xb)
\end{equation}
as well as the weights
\begin{equation}
 \label{eq:5}
 w_l = \EE_\nu W_l.
\end{equation}
By choosing $\phi(\xb) = W_{l'}(\xb)$ in this expression, we deduce
that the weights satisfy the eigenvalue problem \cite{thiede2016eigenvector}
\begin{equation}
 \label{eq:13}
 w_{l'} = \sum_{l=1}^L w_{l} p_{ll'}, \quad l'=1,\ldots, L, \quad
 \text{subject to} \ \ \sum_{l=1}^L w_l = 1,
\end{equation}
where we defined
\begin{equation}
 \label{eq:11}
 p_{ll'} = \< W_{l'} \>_l
\end{equation}
In practice, we can sample
$Z_{l}^{-1} W_l(\xb) d\nu(\xb) $ by Metropolis-Hastings
Monte-Carlo on a potential biased by $- \log W_l(\xb)$ and  compute expectations in this ensemble as
\begin{equation}
 \label{eq:8}
 \EE_l\phi  \approx \frac1n \sum_{i=1}^n  \phi(\xb_{i,l}), \qquad
 \xb_{i,l}\sim Z_{l}^{-1} W_l(\xb) d\nu(\xb)
\end{equation}
This allows us to estimate $\EE_l \phi $ in~\eqref{eq:13} as well as
$p_{ll'}$ in~\eqref{eq:11}: knowledge of the latter quantity enables us to solve the eigenvalue problem in~\eqref{eq:13} to find
the weights $w_l$, and finally estimate~$\EE_\nu \phi$ via~\eqref{eq:3}. 
The sampling can be accelerated by using replica exchange between the ensembles in the different windows, which alleviate potential problems due to metastability within these windows. 
That is, at an interval $t_{\rm swap}$, we attempt to exchange the configuration in window $l$ with a configuration in a neighboring window, accepting the move with a Metropolis acceptance probability.

As of yet, we have not specified the windowing functions $W_l(\xb)$.
Because the $W_l(\xb)$ determine where the samples concentrate a ``good'' choice of these functions is crucial for the success of the sampling scheme. 
Here we propose to make this choice adaptive to the function $f(\xb,\thetab)$ that is being optimized, by dividing space into regions where $f(\xb,\thetab)$ takes specific values. To this end, let
\begin{equation}
 \label{eq:7}
 \sigma(u) = \frac1{1+e^{-u}}
\end{equation}
and given $u_0< u_1< u_2< \cdots < u_L$ and some $k>0$, define
\begin{equation}
 \label{eq:6}
 W_l(\xb) = \sigma\left(k(f(\xb,\thetab)-u_{l-1})\right)
 -\sigma\left(k(f(\xb,\thetab)-u_{l})\right), \qquad l=1,\ldots,L
\end{equation}
In the applications considered below $f$ is a probability and hence its range is restricted to $[0,1]$. As a result we have
\begin{equation}
 \label{eq:9}
 \begin{aligned}
  \forall \xb \in \Omega \ \ : \ \ \sum_{l=1}^L W_l(\xb) & =
  \sigma\left(k(f(\xb,\thetab)-u_{0})\right) -
  \sigma\left(k(f(\xb,\thetab)-u_{L})\right)                  \\   & \ge
  \sigma(-ku_{0})-\sigma(k(1-u_{L}))
 \end{aligned}
\end{equation}
That is if we take $k$ large enough and pick $u_0=-a$ and $u_L = 1+a$ with $a>0$ such that $k a\gg1$, the non-negative functions $W_l(\xb)$ can be made to satisfy~\eqref{eq:2} to arbitrary precision exponentially fast in $ak$. The functions $W_l(\xb)$ are also peaked around $f(\xb,\theta) = \frac12 (u_{l}+u_{l-1})$ which means that by
taking enough values of $u_l$ between $u_0=-a$ and $u_L = 1+a$ we can
cover all the range of possible values for $f(\xb,\thetab)$.
The $u_l$ can be spaced linearly, or, to concentrate sampling near the rapidly varying part of the committor function can be spaced geometrically away from $u_l = 1/2$.

An explicit scheme putting these steps together with SGD is described in Algorithm~\ref{alg:basic}, where we
provide a description of the most straightforward implementation of
our approach.
Algorithm~\ref{alg:basic} is sequential; a version in which we evolve $\xb_{i,l}$ and $\thetab$ concurrently would allow for significant wallclock speed-ups.
\begin{center}
\begin{algorithm}[t]
 \label{alg:basic}
  \KwData{Lagrangian $\ell(\xb,\thetab) = \mathcal{L}(\xb,f(\cdot, \thetab))$, initial $\thetab$, $n\in \NN$, $L\in \NN$, $\alpha>0$, $k>0$, $u_0<\ldots< u_L$.}
  \While{$\grad_{\thetab} L_n(\thetab) > \epsilon_{\rm tol}$} 
  {
      \For{$l=1,\dots,L$}
      {
          \For {$i=1,\dots, n$} 
          {
              Sample $\xb_{i,l} \sim Z_l^{-1}  W_l(\xb) d\nu(\xb)$\;
              Propose replica swaps;
          }
          Compute $$\begin{aligned}p_{l,l'} &= \frac{1}{n} \sum_{i=1}^n W_{l'}(\xb_{i,l})\qquad \text{for $l'=1,\ldots, L$, \ \ and}\\
            \Gb_l[\thetab] &= \frac1n\sum_{i=1}^n \nabla_{\thetab} \ell(\xb_{i,l},\thetab)
            \end{aligned}
          $$
      }
      Solve~\eqref{eq:13} for $w_l$, $l=1,\ldots, L$\; 
      Compute $$\grad_{\thetab} L_n(\thetab) = \frac{1}{L} \sum_{l=1}^L \Gb_l(\thetab) w_l$$
      Update $\thetab \leftarrow \thetab - \alpha \grad_{\thetab} L_n(\thetab)$\;
  }
  \KwResult{$\thetab$}
 \caption{Importance Sampled Variational Stochastic Gradient Descent.}
\end{algorithm}
\end{center}

\section{Application: High-dimensional Backward Kolmogorov Equations (BKE)}

Within the framework of statistical mechanics, the evolution of complex physical systems can be described by probability distributions and expectations that solve partial differential equations like the Fokker-Planck equation or the backward Kolmogorov equation (BKE). Because systems of practical interest are often high-dimensional, these PDEs are typically not solved directly---rather we resort to Monte-Carlo sampling methods or molecular dynamics simulations to estimate the system distribution. Our aim here is to investigate whether we can bypass these sampling methods, and go back to solving the relevant PDEs, using tools from ML.

\subsection{ The metastability problem}

We will focus on one specific problem often encountered in practice: how to analyze the dynamics of systems displaying metastability---i.e. evolution that occurs on a wide range of very different time-scales.
Consider in particular a physical system with coordinates
$\Xb_t\in \RR^{d}$ whose evolution is governed by the
Langevin equation
\begin{equation}
 \label{eq:langevin}
 d\Xb_t = -\grad V(\Xb_t) dt + \sqrt{2\beta^{-1}} d\Wb_t.
\end{equation}
Here $V: \RR^d\to[0,\infty)$ is a potential energy function, $\beta>0$, which controls the magnitude of the fluctuations, is typically interpreted as the inverse temperature in physical systems, and $\Wb_t$ is a Wiener process.
This dynamics, or its variant with momentum included, is ubiquitously used to model molecular dynamics in the condensed phase but has also been proposed as a heuristic model for stochastic optimization methods like SGD~\cite{yaida_fluctuation-dissipation_2018-1} and sampling-based optimization schemes~\cite{ma_sampling_2019}.  

In the context of a system with a dynamics governed by~\eqref{eq:langevin}, metastability arises when the system remains confined in some region of its phase space for very long periods of time and seldom makes a transition to another such region.  In general, it is not possible to directly observe the transitions between these metastable states using trajectories generated by~\eqref{eq:langevin} because the state space is very high dimensional in nontrivial cases and these transitions are by definition very infrequent.

Solving a high-dimensional PDE offers an alternative, in principle. Indeed metastability can be characterized mathematically as the property that the spectrum of the infinitesimal generator associated with~\eqref{eq:langevin} has a ``spectral gap'' between a set of low-lying eigenvalues with small magnitude compared to the rest of them---these low lying eigenvalues specify the rates of transition between metastable states, while the associated eigenfunctions describe their mechanism~\cite{bovier_metastability_2002,gaveau_theory_1998}. 
The  eigenvalue/eigenfunction pairs solve the minimization problem
\begin{equation}
    \label{eq:eigen}
    \lambda_k = \min_{\varphi} Z^{-1} \int_{\RR^d} |\nabla \varphi_k(\xb)|^2 e^{-\beta V(\xb)} d\xb, \qquad k\in \NN_0
\end{equation}
where $Z =\int_{\RR^d} e^{-\beta V(\xb)} d\xb $ and successive eigenfunctions are obtained by requiring that they be orthonormal to the previous ones: starting from $\varphi_0 =1$ with $\lambda_0=0$, for $k\in \NN$, we impose
\begin{equation}
    Z^{-1} \int_{\RR^d} \varphi_k(\xb) \varphi_{k'}(\xb) e^{-\beta V(\xb)} d\xb = \delta_{k,k'} \quad \text{for \ $k'=0,\ldots, k$}
\end{equation}
This gives $0=\lambda_0 < \lambda_1 < \cdots $.

While the minimization problem in~\eqref{eq:eigen} fits the framework of~\eqref{eq:ritz1}, in complex systems there may be hundreds or thousands of metastable states, and only a few of them are actually relevant~\cite{cameron_flows_2014}. In this context, it is preferable to focus on one transition of interest at a time. This can be achieved using the potential theoretic framework to metastability \cite{bovier_metastability_2002,bovier_metastability_2006} or transition path theory (TPT) \cite{e_towards_2006,e_transition-path_2010}, and this is the approach we will focus on next. 

\subsection{Potential approach via BKE}

Suppose we want to quantify the average rate and mechanism by which the solution to the Langevin equation~\eqref{eq:langevin} makes a transition from a state $A\subset \RR^{d}$ to a distinct state $B\subset \RR^{d}$.  This can be done by calculating the ``committor function'' $q:\RR^d \to [0,1]$, which gives the probability that a trajectory starting at $\xb$ first reaches $B$ before $A$:
\begin{equation}
 q (\xb) := \mathbb{P}^{\xb}(t_B < t_A)
\end{equation}
where $t_A = \inf \{ t : x(t) \in A\}$ and similarly for $t_B$.
Under the dynamics~\eqref{eq:langevin}, the committor $q(\xb)$ solves the backward Kolmogorov equation 
\begin{equation}
 \label{eq:appbke}
 \begin{cases}
   (L q)(\xb)=0             & \text{ for } \xb \not \in A
   \cup B       \\
  q(\xb) = 0 &\text{ for } \xb \in A \\
  q(\xb) = 1 &\text{ for } \xb \in B.
 \end{cases}
\end{equation}
where  $-L$ is the infinitesimal generator of the process defined
by~\eqref{eq:langevin}:
\begin{equation}
  L q = \grad V \cdot \grad q- \beta^{-1} \lap q.
\end{equation}
It can be shown that, with appropriate choice of $A$ and $B$, $q(\xb)$ can be asymptotically related to a eigenfunction $\varphi_k$ in the low-lying part of the spectrum as $\varphi_k(\xb) =aq(\xb)+b$ for some appropriate choice of $a$ and $b$---we refer the interested reader to \cite{bovier_metastability_2006} for details.  Here we will focus on using the active learning method we propose to solve the backward Kolmogorov equation in~\eqref{eq:appbke} in  high dimension, i.e. in a setup where we would not be able to solve it  using classical numerical PDE methods such as the finite element method. Specifically, our goal in the next sections is  to define a parametric representation of the committor function by a neural network and an objective function that enables us to optimize the parameters in this network via active learning with importance sampling.

\subsection{Variational loss functions for learning the committor}

The committor satisfies a Ritz-type variational principle~\eqref{eq:ritz2} that the can be employed directly as an objective function: That is, the solution to the BKE~\eqref{eq:appbke} is the minimizer of
\begin{equation}
 \label{eq:infqV}
 \inf_{q} C(q) \quad \text{subject to} \ \  q= 0 \ \ \text{in $
   A$} \ \ \text{and} \ \ q  = 1 \ \ \text{in $B$}
\end{equation}
where
\begin{equation}
 \label{eq:objV}
 \begin{aligned}
  C(q) & =  \int_{\RR^d} |\grad q(\xb)|^2 d\nu(\xb)\qquad \text{with} \quad d\nu(\xb) = Z^{-1} e^{-\beta V(\xb)} d\xb
 \end{aligned}
\end{equation}
In the optimization procedure below, it is more tractable to penalize deviations from the boundary conditions rather than impose them as constraints.
Consequently, we add penalty terms in~\eqref{eq:objV} to ensure that the committor has the right values on $A$ and $B$ and the objective function we will use is
\begin{equation}
 \label{eq:objfun}
 \begin{aligned}
  C_\lambda(q) & =  \int_{\RR^d} |\grad q(\xb)|^2 d\nu(\xb)                                                  + \lambda \int_A |q(\xb)|^2 d\nu(\xb)+ \lambda
  \int_B | 1-q(\xb)|^2 d\nu(\xb)                
 \end{aligned}
\end{equation}
where $\lambda>0$ is an adjustable parameter. This objective function is of the type in~\eqref{eq:ritz12} with
\begin{equation}
    \label{eq:lag:com}
    \mathcal{L}(\xb,q) = |\grad q(\xb)|^2                                         + \lambda |q(\xb)|^2 1_A (\xb)+ \lambda
  | 1-q(\xb)|^2 1_B(\xb)
\end{equation}
where $1_{A}(\xb)$ and $1_{B}(\xb)$ denote the indicator functions of sets $A$ and $B$, respectively. 

As discussed above, it is natural to model the minimizer of this cost functional as a neural network.
Given some representation $f(\cdot,\thetab)$ with parameter set $\{ \thetab \}_{i=1}^n$, the problem becomes to minimize $C_\lambda$ over this set.
We discuss the specific architectures that we use in applications below, but any neural network architecture is admissible within in this scheme, provided that it gives an output in the range $[0,1]$, which is simple to achieve in practice by passing the output through a sigmoidal function (Appendix~\ref{app:rep}).
Even this condition can be relaxed: We describe an alternative formulation of the committor (cf. ~\cite{lu_exact_2014}) in Appendix~\ref{app:charges} which can be solved with distinct boundary conditions.
The scheme we have described here could be implemented using symmetry functions or collective variables, which we leave for future work.

\subsection{Numerical Experiments}

\subsubsection{M\"uller-Brown potential}

As a proof of concept, we optimize the committor function on the well-studied M\"uller-Brown potential~\cite{muller1979location}.
We consider the dynamics~\eqref{eq:langevin} for a 2D system evolving in a Gaussian mixture potential
\begin{equation}
 \label{eq:v_mb}
 V_{\rm MB}(\xb) = \sum_{k=1}^4 A_k \exp \left( -(\xb-\nu_k)^T \Sigma_k^{-1} (\xb-\nu_k) \right)
\end{equation}
with
\begin{equation}
 \begin{aligned}
   & A = (-200,-100,-170,15)           \\
   & \nu = \begin{pmatrix}1\\0\end{pmatrix},
  \begin{pmatrix}0\\0.5\end{pmatrix},
  \begin{pmatrix}-0.5\\1.5\end{pmatrix},
  \begin{pmatrix}-1\\1\end{pmatrix}           \\
   & \Sigma^{-1} =
  \begin{pmatrix}1 & 0 \\ 0 & 10\end{pmatrix},
  \begin{pmatrix}1 & 0 \\ 0 & 10\end{pmatrix},
  \begin{pmatrix}6.5 & -5.5 \\ -5.5 & 6.5\end{pmatrix},
  \begin{pmatrix}0.7 & 0.3 \\ 0.3 & 0.7\end{pmatrix}
 \end{aligned}
\end{equation}
Our results, shown in Fig.~\ref{fig:mb}, demonstrate the importance sampling is required to converge a robust estimate of the committor. 

While the contours of the committor provide a reasonable guide, a ``committor analysis'' gives a more precise test of convergence. 
To carry this analysis out, we sampled 100 distinct initial configurations from each window (where $q=0.1, 0.2, \dots$) and then ran unbiased Langevin dynamics to compute $\min (t(A), t(B))$.
Histograms of this calculation show that active importance sampling leads to trajectories that reach $B$ before $A$ with the expected probabilities.
However, without importance sampling estimate of $q$ performs poorly.

We represent the committor as a single-hidden layer ReLU network with $m=100$ units.
The output of the ReLU network is passed through a sigmoidal function to compress the range because the committor represents a probability. 
To initialize the representation, we take a discretized linear interpolation between the center of basins $A$ and $B$ and optimize the representation to match the normalized distance along this path. 
At each optimization step, we collect 50 samples from each of the 10 windows in $q$-space. 
We use this sample to estimate the gradient after reweighting, as described above. 
We run the optimization for a total of 1000 optimization steps using 50 samples per window per optimization step with a restraint of $k=100$ in the windowing function. 

To make a systematic comparison, we ran a control experiment in which we used a single unbiased trajectory (that is, no importance sampling) to carry out the optimization.
The total number of samples from this trajectory (10000 optimization steps with 50 samples per step) was chosen to be equal to the total amount of data collected in our importance sampling optimization.
As shown in Fig.~\ref{fig:mb}, this approach does not succeed. 

\subsubsection{Allen-Cahn-type system}

Unlike standard approaches to computing the committor (e.g., finite elements), the algorithm outlined here also succeeds when the input space is high-dimensional.
As a non-trivial test of robustness, we will consider the following example building on a discretized version of the Allen-Cahn equation in two-dimension. 
Let us start from 
\begin{equation}
  \label{eq:pdeCW}
  \partial_t \rho  = D \lap \rho + \rho - \rho^3, \qquad \rho : [0,\infty) \times [0,1]^2 \to \RR
\end{equation}
with the Dirichlet boundary conditions
\begin{equation}
  \label{eq:10}
  \rho(t,z_1,z_2) = +1, \quad \text{for} \ \ z_1=0,1, \qquad \rho(t,z_1,z_2) = -1,
  \quad \text{for} \ \ z_2=0,1,
\end{equation}
The Allen-Cahn equation is the gradient flow in $L^2$ over the energy functional
\begin{equation}
  \label{eq:ECW}
  E(\rho) = \int_{[0,1]^2}\left(
  \tfrac12 D |\grad \rho(\zb)|^2 + \tfrac14 (1-|\rho(\zb)|^2)^2\right) d\zb
\end{equation}
If we take $D$ small enough, this energy admits two minimizers, which are also the stable fixed points of~\eqref{eq:pdeCW} that solve
\begin{equation}
  \label{eq:12}
   D \lap \rho + \rho - \rho^3=0
 \end{equation}
 These fixed points are either mostly $\rho=1$ in the domain, with boundary layer of size $D^{-1/2}$ near $z_2=0,1$, or mostly $\rho=-1$, with boundary layer of size $D^{-1/2}$ near $z_1=0,1$. These two solutions are depicted in Fig.~\ref{fig:cw}.

 \begin{figure}
 \centering
 \parbox[b][4.5cm][c]{0.25\linewidth}{
 \includegraphics[width=\linewidth]{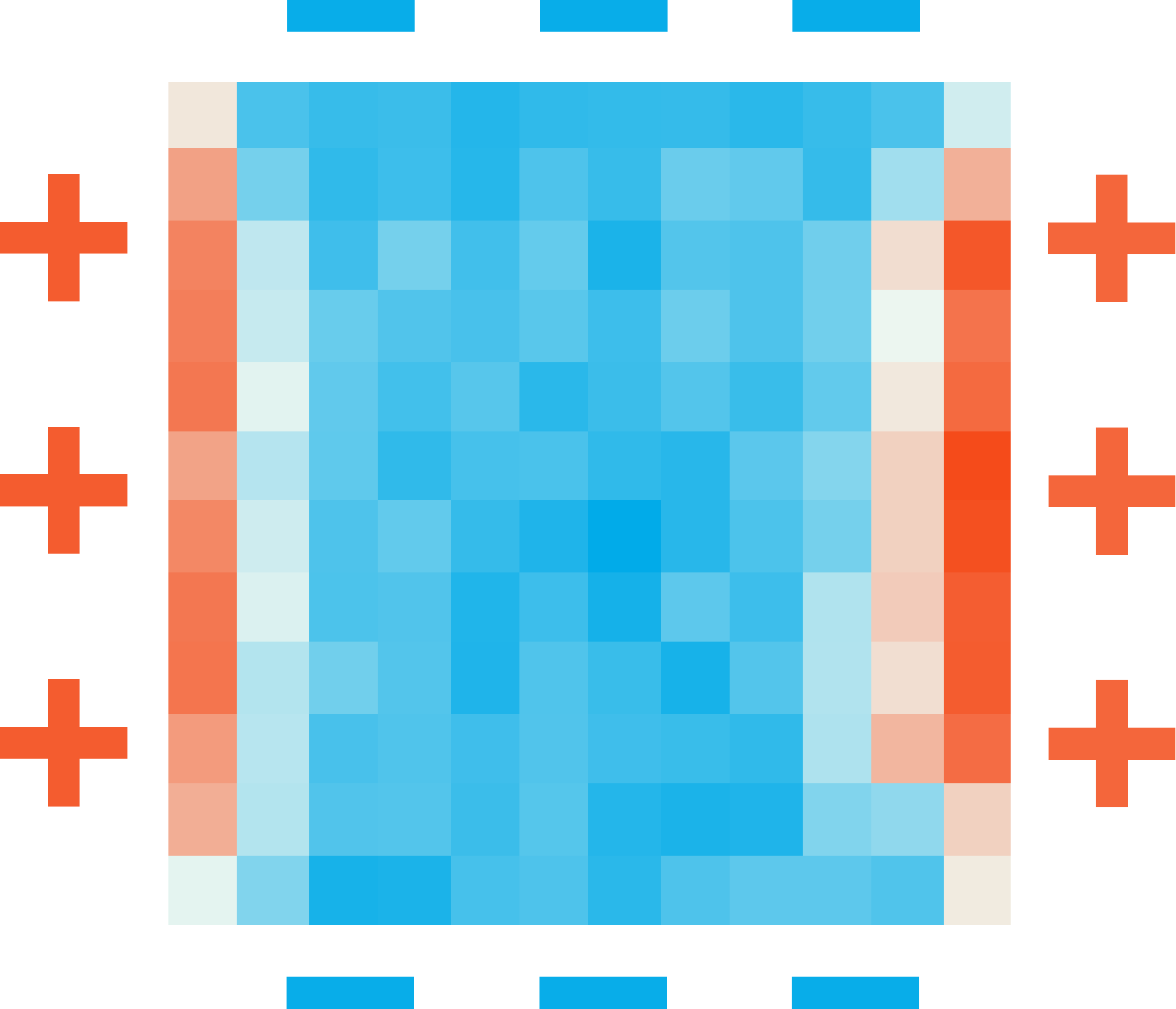}
 }
 \includegraphics[width=0.45\linewidth]{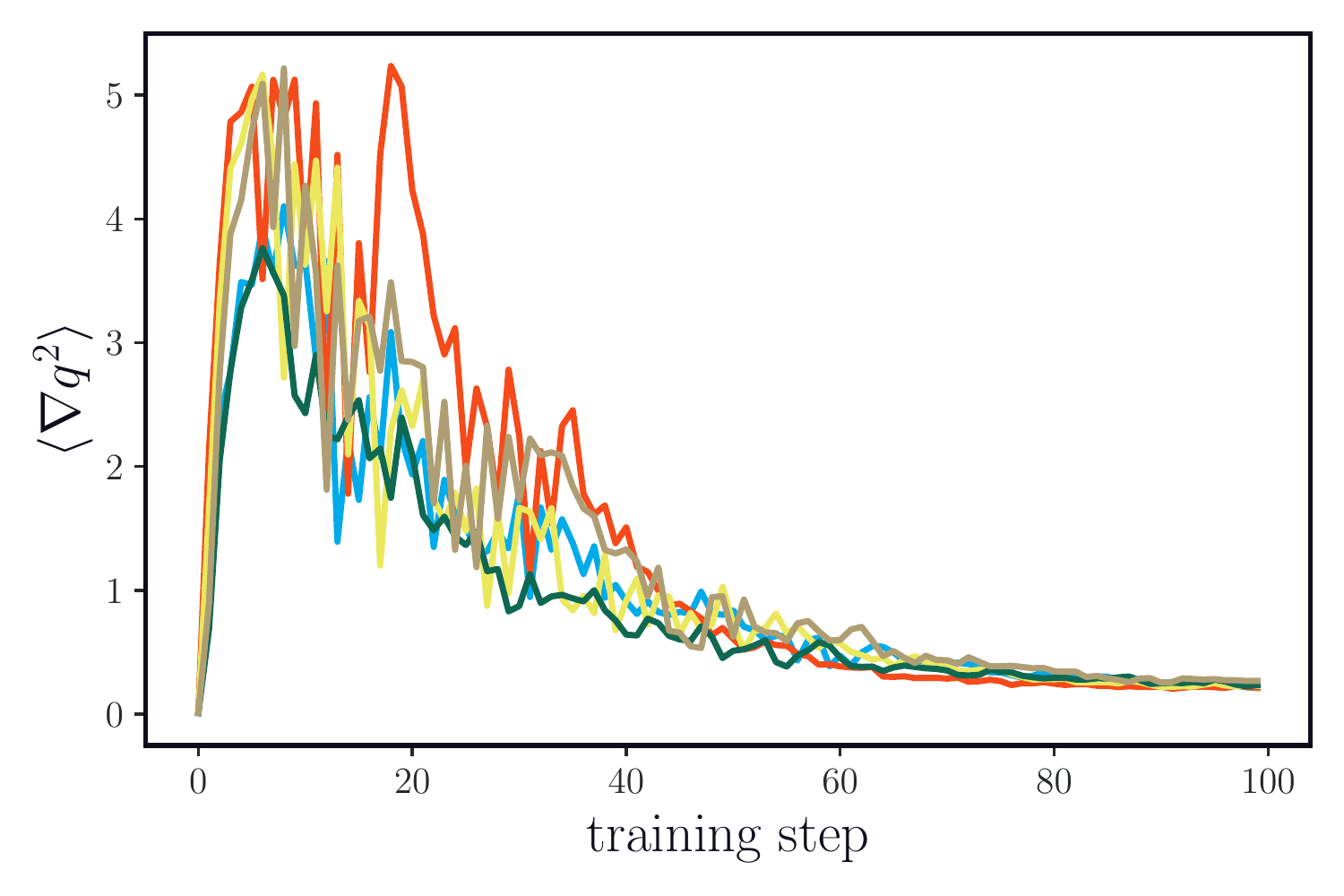}
 \parbox[b][4.5cm][c]{0.25\linewidth}{
 \includegraphics[width=\linewidth]{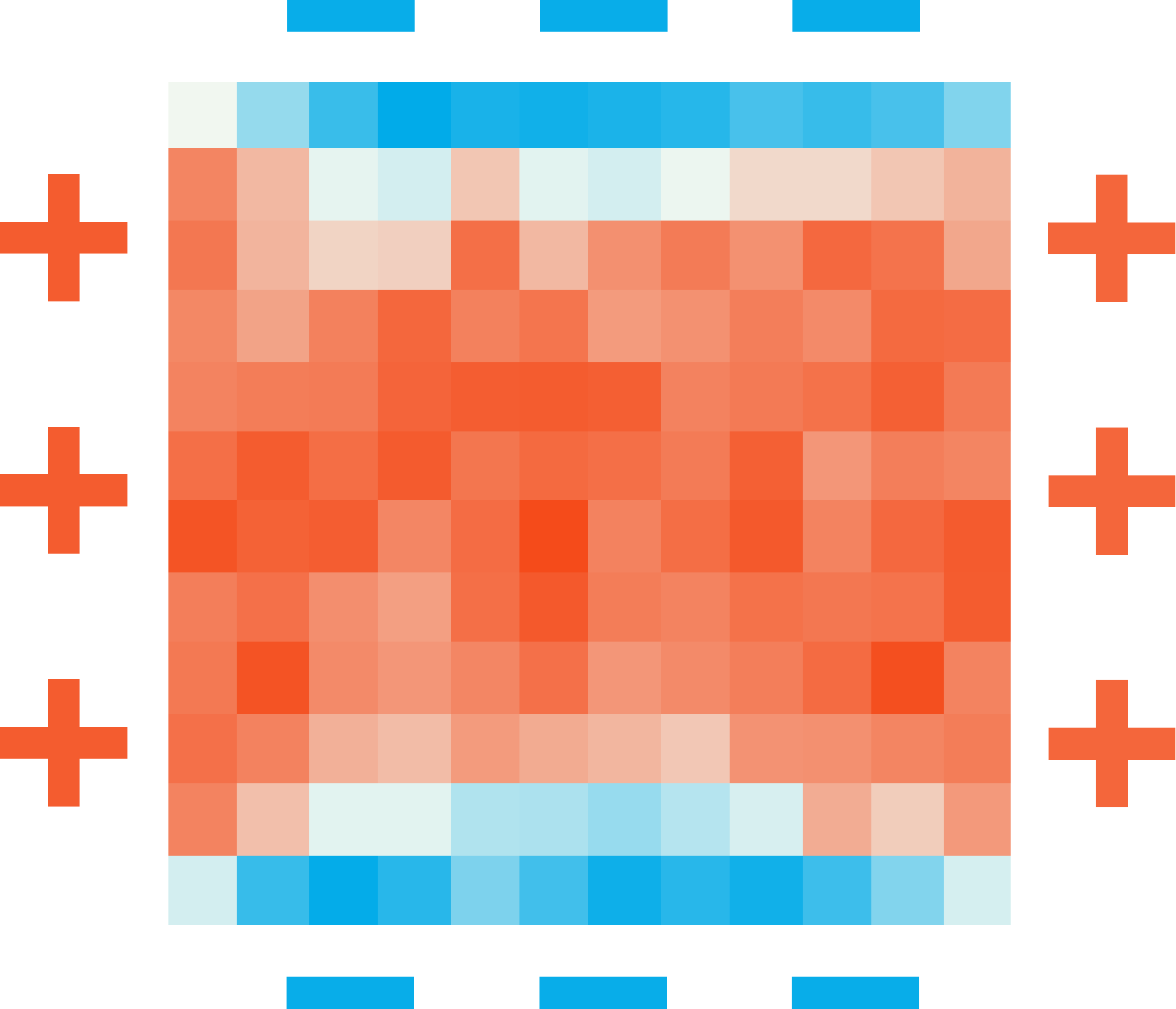}
 }
 \includegraphics[width=1.0\linewidth]{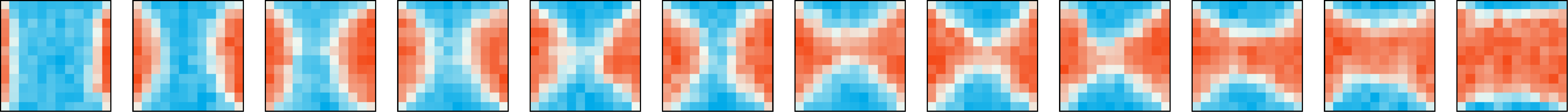}
 \caption{Top left and right: The two metastable solutions of \eqref{eq:cw_sde} with Dirichlet boundary conditions ($\rho=1$ at the left and right boundaries, $\rho=-1$ at the top and bottom boundaries). Top center: Decay of the loss of as a function of training step for 10 runs of the optimization with random initial conditions. Bottom: A sample transition path obtained by sampling the biased ensemble with $(q=0,\dots,1)$. The path shows the characteristic nucleation pathway for a transition between the two metastable states, with the expected hourglass shape.}
 \label{fig:cw}
\end{figure}

To build the model that we will actually study, let us discretize \eqref{eq:pdeCW} on a lattice with spacing $h=1/(N-1)$ and
 introduce
\begin{equation}
  \label{eq:14}
  \rho_{i,j} = \rho(ih,jh), \qquad i,j =1,\ldots, N
\end{equation}
We also add some additive noise to the discretized equation to arrive at the Langevin equation
\begin{equation}
\label{eq:cw_sde}
  d \rho_{i,j} = \left(\rho_{i,j} - \rho_{i,j}^3 + D
    (\Delta_N\rho)_{i,j}\right)dt + \sqrt{2\beta^{-1}}h^{-1} dW_{i,j}
\end{equation}
Here $W_{i,j}$ is set of independent Wiener processes, $\Delta_N$ is the discrete Laplacian,
\begin{equation}
  \label{eq:17}
  (\Delta_N\rho)_{i,j} = h^{-2}
    \left(\rho_{i+1,j} + \rho_{i-1,j} + \rho_{i,j+1}  + \rho_{i,j-1} -
      4\rho_{i,j}\right),
\end{equation}
and the boundary conditions read
\begin{equation}
  \label{eq:16}
  \begin{aligned}
    &\rho_{i,j} = 1, \quad &&\text{for} \ \ i=0,N+1, \quad j=1, \ldots,N,\\
    &\rho_{i,j} = -1, \quad &&\text{for} \ \ j=0,N+1, \quad i=1,
    \ldots,N.
  \end{aligned}
\end{equation}
We also set $\rho_{0,0}=\rho_{0,N+1}=\rho_{N+1,0}=\rho_{N+1,N+1}=0$. 

If we take $D$ and $\beta^{-1}$ small enough, the Langevin equation~\eqref{eq:cw_sde} displays metastability: the solution stays confined for long period of times in regions near the fixed points of the deterministic equation (obtained by setting $\beta^{-1}=0$ in~\eqref{eq:cw_sde}) and only rarely make transition between these regions. This can be confirmed by looking at the equilibrium distribution of~\eqref{eq:cw_sde}:
\begin{equation}
    \label{eq:gibbsAC}
    d\nu(\rhob) = Z^{-1} e^{-\beta V(\rhob)} d\rhob
\end{equation}
where we denote $\rhob = (\rho_{i,j})_{i,j=1}^N$, $Z = \int _{\RR^{2N-2}}e^{-\beta V(\rhob)} d\rhob$  and 
 the potential $V(\rhob)$ is the discrete equivalent to~\eqref{eq:ECW}:
\begin{equation}
  \label{eq:15}
  V(\rhob) =
  h^{-2}\sum_{i,j=1}^N \left(\tfrac12 D |(\grad_N \rho)_{i,j}|^2 +
    \tfrac14 (1-|\rho_{i,j}|^2)^2\right)
\end{equation}
where $\grad_N$
is the discrete gradient so that
\begin{equation}
  \label{eq:17grad}
  |(\grad_N\rho)_{i,j}|^2 = h^{-2}
    \left(\rho_{i+1,j} - \rho_{i,j}\right)^2  + h^{-2}
    \left(\rho_{i,j+1} - \rho_{i,j}\right)^2.
\end{equation}
For small $\beta^{-1}$ the distribution~\eqref{eq:gibbsAC} is nearly atomic on the two minimizers of $V(\rhob)$ shown in Fig.~\ref{fig:cw}., and the question is how do rare transitions occur between these metastable state and at which average rate. This question can be answered by solving the BKE for the committor associated with~\eqref{eq:cw_sde}

We solved this problem using the active learning method outlined before in a situation where $N=12$, i.e. the state space is $12^2=144$ dimensional.
We use a single hidden layer ReLU network with $m=100$ neurons which is passed through a sigmoidal function at the output layer to ensure that the range of $q$ is $(0,1)$.
The network is initialized by linearly interpolating homogeneous configurations in magnetization space, which provides no $\emph{a priori}$ information about the spatial structure of the transition path. 
We carried out the optimization with 12 total windows, including the boundary windows, with a learning rate of $5\times 10^{-2}$ for $5000$ steps with 25 sampling steps per window. 

As shown in Fig.~\ref{fig:cw}, this shows the characteristic pathway for a transition between the two metastable states, with the expected hourglass shape as transition state \cite{kohn2007action} that can also be identified by the string method~\cite{e_string_2002,e_string2007} or the minimum action method in this specific example \cite{weinan2004minimum,heymann2008geometric}.
It should be noted that the initial increase in the loss function arises due to an initial representation of the transition path that is not consistent with the dynamics of the model and that once representative configurations are sampled, the estimate of the loss improves.

\section{Conclusion and Future Work}

The approach we propose here enables optimization in contexts in which the loss function is dominated by data that is exceedingly rare with respect to its equilibrium measure.
While we have both theoretical and numerical evidence that this approach is effective for high-dimensional problems and improves generalization, further evidence from physics applications would bolster our current findings.
In particular, we must test our approach on more complicated systems, like those typically arising in biophysics.
In such systems, there may be multiple pathways connecting two metastable states, a complication that we did not investigate thoroughly here.

In some sense, the promise of machine learning for solving committor equations can be conceptualized by interpreting these problems as classification problems.
In the examples we consider, the primary task directly resembles binary classification in which the network is attempting to find a dividing surface between classes in a high dimensional space.
The isocommittor surface is defined by the dynamical fate of points in this space and collecting data to adequately resolve the location of the boundary is typically impossible without importance sampling. 

While our algorithm and code can easily employ any neural network architecture, we used very simple neural networks for the examples in this paper.
Finding architectures that are well-adapted to a given physical system remains an important challenge~\cite{kearnes_molecular_2016}.
Additionally, there are natural improvements to the implementation of our algorithm: adaptive windowing, more sophisticated reweighting schemes, and exploiting the ``embarrassingly parallel'' structure of the computation to obtain computational speed-ups.

The class of PDEs that we consider here could be generalized to include Ritz-type objectives with forcing terms, as well.
Problems that are driven away from the equilibrium Gibbs distribution pose significant challenges for existing sampling techniques and represent an important target for future work. 

\small

\bibliography{refs}

\newpage
\appendix
\numberwithin{equation}{section}
\numberwithin{theorem}{section}

\section{One-dimensional example}
\label{sec:one:d}
To illustrate the necessity of importance sampling for objectives dominated by rare events, consider the one-dimensional committor problem associated with transitions between the minima located at $x=x_1$ and $x=x_2$ of  the potential $V(x)= (1-x^2)+x/10$, i.e. the minimization of
\begin{equation}
    \label{eq:com:1d}
    \int_{-1}^1 |q'(x)|^2 e^{-\beta V(x)} dx
\end{equation}
The minimizer of this objective function subject to $q(x_1)=0$, $q(x_2)=1$ is
\begin{equation}
    \label{eq:min:com:1d}
    q(x) = \frac{\int_{x_1}^x e^{\beta V(y)} dy}{\int_{x_1}^{x_2} e^{\beta V(y)} dy}
\end{equation}

\begin{figure}[h]
 \centering
 \includegraphics[width=0.48\linewidth]{1d_example_pop_loss_mean.pdf}
 \includegraphics[width=0.48\linewidth]{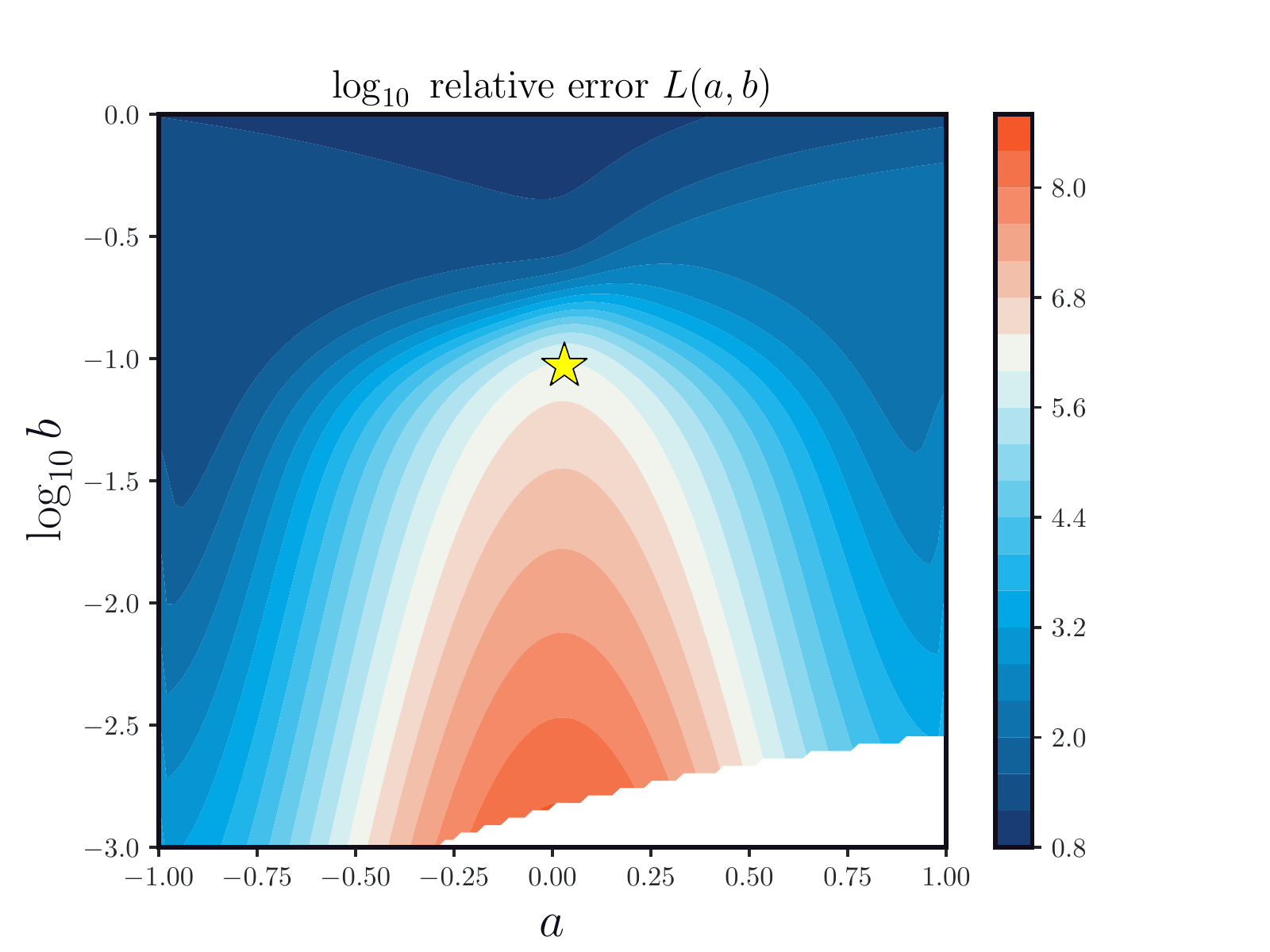}
 \caption{The loss landscape $L(a,b)$ in~\eqref{eq:loss:a:y} and the relative error on the estimator ( = std/loss) when the data is drawn from the Gibbs distribution with density $Z^{-1}_{[x_1,x_2]} = e^{-\beta V(x)}$ restricted on $x\in[x_1,x_2]$. At the minimum of the loss (located at the red dot), this relative error is  about $60$. Here $\beta = 1/8$ (i.e. the energy barrier is $8 k_BT$) and the optimal parameters are $a\approx0.04$ and $b\approx 0.11$.}
 \label{fig:loss1d}
\end{figure}

For large $\beta$, this function is sigmoid-like with a sharp transition from 0 to 1 around $x=0.1$. 
Suppose that we want to approximate it using the parametric representation 
\begin{equation}
\label{eq:qab}
    q(x;a,y) = \sigma((x-a)/b) \qquad \text{where} \qquad \sigma(z) = \frac1{1+e^{-z}}
\end{equation}
This function does not satisfy the boundary condition exactly, but for $a$ around 0.1 and $b$ small enough, it does a good job at representing the exact~\eqref{eq:min:com:1d} (see the top left panel in Fig.~\ref{fig:losscomm1d}). Accordingly, let us look at the loss function as a function of $(a,b)$ in this parameter range, viewed as an expectation of $|q'(x;a,b)|$ the Gibbs distribution with density $e^{-\beta V(x)}$ restricted to $x\in[x_1,x_2]$ and properly normalized on that interval:
\begin{equation}
    \label{eq:loss:a:y}
    L(a,b) = Z_{[-x_1,x_2]}^{-1} \int_{x_1}^{x_2} |q'(x;a,b)|^2 e^{-\beta V(x)} dx \qquad \text{with} \qquad 
    Z_{[-x_1,x_2]}^= \int_{x_1}^{x_2}  e^{-\beta V(x)} dx
\end{equation}
where
\begin{equation}
    |q'(x;a,b)|^2 = b^{-2} \sigma^2((x-a)/b)(1-\sigma((x-a)/b))
\end{equation}
The population and empirical losses were shown in in the bottom panels of Fig.~\ref{fig:losscomm1d}: the latter was obtained by drawing $10^4$ independent samples from $Z_{[-x_1,x_2]}^{-1} e^{-\beta V(x)} $ using a rejection method, resulting in the empirical distribution shown in the top right panel of Fig.~\ref{fig:losscomm1d}.
Here we compute an additional quantity:  the variance of the estimator for the population loss if we use data sampled from $Z_{[x_1,x_2]}^{-1}  e^{-\beta V(x)}$. 
The result (together with the population loss) is shown in Fig.~\ref{fig:loss1d}: when $beta$ is large so that the energy barrier is also large in units of $k_BT$ (here $\beta = 1/8$, so that the barrier is $8 k_BT$), the relative error on the loss becomes large in the regions close to the minimum of this loss. 

Note that in this one-dimensional example, adding a regularizing term to the empirical loss improves its predictions. However this strategy will not be generically applicable to higher dimensional situations.

\section{Variance reduction improves generalization}
\label{app:generr}

\begin{proof}[Proof of Proposition~\ref{th:1a}]
Recall that the discrete time updates of the stochastic gradient descent dynamics are obtained from:
\begin{equation}
    \label{eq:appgd}
    \thetab^{k+1} = \thetab^k - \alpha \nabla_{\thetab} L_n(\thetab^k), \qquad k=0,1, 2, \ldots
\end{equation}
Since the miniibatches are draw independently at every step and $\nabla_{\thetab} L_n(\thetab)$ is an unbiased estimator of $\nabla_{\thetab} L(\thetab)$, in law~\eqref{eq:appgd} is equivalent to 
\begin{equation}
    \label{eq:appsgdlaw}
        \thetab^{k+1} = \thetab^k - \alpha \nabla_{\thetab} L(\thetab^k) + \frac{\alpha}{\sqrt{n}} \nabla_{\thetab}\eta(\thetab^k), \qquad k=0,1, 2, \ldots
\end{equation}
where $\eta$ is a random function with mean zero, $\EE_\nu \eta(\thetab) = 0$, and covariance
\begin{equation}
    \EE_\nu \eta(\thetab)\eta(\thetab') = \EE_\nu \ell(\xb,\thetab)\ell(\xb,\thetab') - L(\thetab)L(\thetab').
\end{equation}
Let us introduce $\tilde \thetab_n^k$ defined as
\begin{equation}
    \tilde \thetab_n^k = \sqrt{\frac{n}{\alpha}} \big(\thetab^k - \bar \thetab^k\big) 
\end{equation} 
where $\{\bar\thetab^k\}_{k\in\NN_0}$ are the update from the GD scheme in~\eqref{eq:gd} so that 
\begin{equation}
    \label{eq:appsgdlaw12}
        \tilde \thetab_n^{k+1} = \tilde \thetab^k - \sqrt{\alpha n}   \big(  \nabla_{\thetab} L(\bar\thetab^k +\sqrt{\alpha/n} \tilde\thetab_n^k) -  \nabla_{\thetab} L(\bar\thetab^k)\big) + \sqrt{\alpha} \nabla_{\thetab}\eta(\bar\thetab^k +\sqrt{\alpha/n} \tilde\thetab_n^k).
\end{equation}
for $k=0,1,2,\ldots$.
Taking the limit as $n\to\infty$  shows that for each $k$ $\tilde \thetab_n^{k} to \tilde \thetab^{k}$, where $\{\thetab^{k}\}_{k\in\NN_0}$ is the solution of the updating scheme
\begin{equation}
    \label{eq:appsgdlaw2}
        \tilde \thetab^{k+1} = \tilde \thetab^k - \alpha H^k \tilde\thetab^k + \sqrt{\alpha} \bb^k, \qquad k=0,1, 2, \ldots
\end{equation}
where $H^k = \nabla_{\thetab} \nabla_{\thetab} L(\bar\thetab^k)$ and $\{\bb^k\}_{k\in\NN_0}$ are random vector, independent for different $k$, with mean zero and covariance 
\begin{equation}
    B^k = \EE_\nu \bb^k (\bb^k)^T = \EE_\nu \grad_{\thetab} \ell(\xb,\bar\thetab^k) (\grad_{\thetab} \ell(\xb,\bar\thetab^k))^T - \nabla_{\thetab} L(\bar\thetab^k)(\nabla_{\thetab} L(\bar\thetab^k))^T
\end{equation}
which we assume to be non-zero when the data set is finite. 

Next note that the limiting sequence $\{\tilde\thetab^k\}_{k\in\NN_0}$ can be used to deduce that
\begin{equation}
\label{eq:lim:n}
     \lim_{n\to\infty} n \EE_D [L(\thetab^k) - L(\bar \thetab^k)] = \tfrac12 \tr [ C^k H^k] \qquad \text{where} \quad C^k =\EE_D \tilde\thetab^k  (\tilde\thetab^k)^T 
\end{equation}
From~\eqref{eq:appsgdlaw2}, the tensor $C^k$ satisfies \begin{equation}
    C^{k+1} = C^k -\alpha H^k C^k - \alpha C^k H^k + \alpha^2 H^k C^k H^k + \alpha B^k
\end{equation}
with $C^0=0$ which follows from $\tilde\thetab^0 = 0$ since $\thetab^0 = \bar\thetab^0$. 
By Assumption~\ref{as:1},  as $k\to\infty$, $H^k\to H^*$, which is the positive-definite tensor defined in~\eqref{eq:min:prop}, and $B^k\to B^*$, which is the tensor defined in~\eqref{eq:B}. This guarantees that $\lim_{k\to\infty} C^k = C^*$, where $C^*$ is the solution to~\eqref{eq:C}. From~\eqref{eq:lim:n}, it also implies that
\begin{equation}
\label{eq:lim:n:k}
     \lim_{k\to\infty}\lim_{n\to\infty} n \EE_D [L(\thetab^k) - L(\bar \thetab^k)] =\lim_{k\to\infty}\lim_{n\to\infty} n \EE_D [L(\thetab^k) - L(\thetab^*)] = \tfrac12 \tr [ C^* H^*]
\end{equation}
which establishes~\eqref{eq:generr1} and ends the proof.
\end{proof}

Note that, from~\eqref{eq:appsgdlaw2}, the $k$th iterate of $\tilde\thetab^k$ is (using $\tilde\thetab^0 = 0$ with follows from $\thetab^0 = \bar\thetab^0$) 
\begin{equation}
    \tilde \thetab^k = \sqrt{\alpha} \sum_{p=0}^{k-1} \prod_{q=0}^p (1-\alpha H^q) \bb^{k-p}
\end{equation}
from which we can get more detailed information about the statistics of the sequence. Note also that, in the limit as $\alpha\to0$, \eqref{eq:appsgdlaw2} reduces  to an SDE similar to that of an Ornstein-Uhlenbeck process.

\section{Active sampling by reweighting}
\label{sec:active1}

The results of Sec.~\ref{sec:online:gen} indicate that the variance of the estimator for the gradient of the population loss dominates the generalization error. In view of this, at every step of SGD, instead of sampling the original measure $\nu$, an option is to sample a modified measure $\tilde \nu$ and reweight the samples in the estimator accordingly, in such a way as to minimize the variance of this estimator. To make this concrete let $g(\xb)=d\tilde \nu/d\nu$ be the Radon-Nikodym derivative of $\tilde\nu$ with respect to $\nu$, assume that $g$ is positive everywhere, and let $\{\tilde \xb_i\}_{i=1}^n$ be a batch of independent samples draw from $\tilde \nu$. Then 
\begin{equation}
    \label{eq:estim}
    \frac1n \sum_{i=1}^n \nabla_{\thetab} \ell(\tilde \xb_i, \thetab) g^{-1}(\xb_i)
\end{equation}
is an unbiased estimator for the gradient of population loss and the choice of $\tilde \nu$ that minimizes the variance of this estimator, i.e. minimizes
\begin{equation}
    \label{eq:estim2}
    \int_\Omega  |\nabla_{\thetab} \ell(\tilde \xb, \thetab)|^2 g^{-2}(\xb) d\tilde \nu(\xb) = \int_\Omega  |\nabla_{\thetab} \ell(\tilde \xb_i, \thetab)|^2 g^{-1}(\xb_i) d\nu(\xb),
\end{equation}
is 
\begin{equation}
\label{eq:tilted:nu}    
    d\tilde \nu(\xb) = g(\xb) d\nu(\xb) \qquad \text{with} \qquad
    g(\xb) = \frac{|\nabla_{\thetab} \ell(\xb,\thetab)|}{\EE_\nu |\nabla_{\thetab} \ell(\cdot,\thetab)|}
\end{equation}
An obvious difficulty with this estimator is that the reweighting factor $g(\xb)$ contains the factor $\EE_\nu |\nabla_{\thetab} \ell(\cdot,\thetab)|$ which we do not know. Still, in the context of optimization by SGD, it is useful since any unknown constant entering the gradient of the loss can be absorbed in the learning rate. To see why consider the following scheme: Starting from some initial value $\tilde \thetab^0$, update these parameters using the iteration rule
\begin{equation}
    \label{eq:sgd:imp}
    \tilde\thetab^{k+1} = \tilde \thetab^k - \frac\alpha n \sum_{i=1}^n 
    \frac{\nabla_{\thetab}\ell(\tilde \xb_i,\tilde \thetab^k)}{ |\nabla_{\thetab}\ell(\tilde \xb_i,\tilde \thetab^k)|}, \qquad k=0,1, 2, \ldots
\end{equation}
where the batch $\{\tilde \xb_i\}_{i=1}^n$ contains independent samples from 
\begin{equation}
    \label{eq:gibbs1}
    d\tilde \nu_k(\xb) = \tilde Z_k^{-1} |\nabla_\theta \ell(\xb,\tilde\thetab^k)| d\nu(\xb) \quad \text{with} \quad \tilde Z_k = \EE_\nu |\nabla_\theta \ell(\cdot,\tilde\thetab^k)|.
\end{equation}
Note that this measure can be sampled by the Metropolis-Hastings method or the Metropolis-adjusted Langevin algorithm without requiring to know its normalization factor $\tilde Z_k$. Under  Assumption~\ref{as:1} we can prove the following equivalent of \eqref{eq:generr1} 
\begin{proposition}
\label{th:1b}
The sequence $\{\tilde \thetab^k\}_{k\in \NN_0}$ obtained using the SGD update in~\eqref{eq:sgd:imp} starting from $\tilde \thetab^0 = \bar\thetab^0$ and using an independent batch of data $\{\tilde \xb_i\}_{i=1}^n$ drawn from $\tilde \nu^k$ at every step is such that
\begin{equation}
\label{eq:generr:imp}
    \lim_{k\to\infty} \lim_{n\to\infty} n \EE_D [\tilde L_n( \thetab^k) - L(\thetab^*)] = \tfrac12\alpha \tr [\tilde C^* H^*]
\end{equation} 
where $\EE_D$ denotes expectation over all the batches used to compute the sequence $\thetab^k$, and $C^*$ is the $N\times N$ tensor that solves
\begin{equation}
    \label{eq:C:imp}
    H^* \tilde C^* + \tilde C^* H^* - \alpha \tilde C^* H^* \tilde C^* = \tilde B^*
\end{equation}
Here $\tilde B^*$ is 
\begin{equation}
    \label{eq:B:imp}
    \tilde B^* = \int_\Omega \frac{\nabla_{\thetab} \ell(\xb,\thetab^*) [\nabla_{\thetab} \ell(\xb,\thetab^*)]^T}{|\nabla_{\thetab} \ell(\xb,\thetab^*)|^2}  d\tilde \nu_*(\xb)
\end{equation}
where 
\begin{equation}
    \label{eq:gibbs2}
    d\tilde \nu_*(\xb) = \tilde Z_*^{-1} |\nabla_\theta \ell(\xb,\thetab_*)| 
    d\nu(\xb) \quad \text{with} \quad \tilde Z_* = \EE_\nu |\nabla_\theta \ell(\cdot,\tilde\thetab_*)|.
\end{equation}
\end{proposition}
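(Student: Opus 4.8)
The plan is to mirror the proof of Proposition~\ref{th:1a} almost verbatim, since the two statements are structurally identical: in both cases we linearize the SGD iteration around the GD iteration, show that the rescaled deviation converges to a discretized Ornstein--Uhlenbeck process, and read off the asymptotic covariance. The only difference is bookkeeping about \emph{which} measure the minibatch noise comes from, which affects the noise covariance $\tilde B^k$ but nothing else in the argument.

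First I would observe that the update~\eqref{eq:sgd:imp} is an unbiased estimator of a \emph{rescaled} gradient flow. Indeed, by the variance-minimizing choice in~\eqref{eq:tilted:nu}--\eqref{eq:gibbs1}, the batch $\{\tilde\xb_i\}$ is drawn from $d\tilde\nu_k = \tilde Z_k^{-1}|\nabla_\theta\ell(\cdot,\tilde\thetab^k)|d\nu$, so
\begin{equation}
\EE_{\tilde\nu_k}\!\left[\frac{\nabla_\theta\ell(\tilde\xb,\tilde\thetab^k)}{|\nabla_\theta\ell(\tilde\xb,\tilde\thetab^k)|}\right] = \tilde Z_k^{-1}\int_\Omega \nabla_\theta\ell(\xb,\tilde\thetab^k)\, d\nu(\xb) = \tilde Z_k^{-1}\nabla_\theta L(\tilde\thetab^k).
\end{equation}
Thus, in law,~\eqref{eq:sgd:imp} is equivalent to $\tilde\thetab^{k+1} = \tilde\thetab^k - \alpha\tilde Z_k^{-1}\nabla_\theta L(\tilde\thetab^k) + (\alpha/\sqrt n)\,\tilde\eta(\tilde\thetab^k)$ with $\tilde\eta$ mean zero and covariance determined by the second moment of the normalized gradient under $\tilde\nu_k$. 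The subtlety here is the factor $\tilde Z_k^{-1}$ in front of the drift: strictly this means the ``GD comparison sequence'' should be the one driven by the rescaled flow $\bar\thetab^{k+1} = \bar\thetab^k - \alpha\tilde Z_k^{-1}\nabla_\theta L(\bar\thetab^k)$. Near the minimizer $\thetab^*$, however, $\nabla_\theta L(\thetab^*)=0$ forces $\tilde Z_* = \EE_\nu|\nabla_\theta\ell(\cdot,\thetab^*)|$ to be a fixed positive constant, and one checks that the relevant linearized Hessian becomes $\tilde Z_*^{-1}H^*$; absorbing $\tilde Z_*^{-1}$ into a redefinition of the learning rate (as the paper itself notes below~\eqref{eq:tilted:nu}) recovers exactly the form~\eqref{eq:C:imp}. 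I would handle this either by that rescaling or by carrying $\tilde Z_*$ explicitly and noting it cancels between the Hessian and the covariance contributions in $\tr[\tilde C^* H^*]$.

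Next I would introduce $\tilde\vartheta_n^k = \sqrt{n/\alpha}\,(\tilde\thetab^k - \bar\thetab^k)$ exactly as in Appendix~\ref{app:generr}, Taylor expand the (rescaled) gradient, and pass to the $n\to\infty$ limit to obtain the linear recursion $\tilde\vartheta^{k+1} = \tilde\vartheta^k - \alpha \tilde H^k\tilde\vartheta^k + \sqrt\alpha\,\tilde\bb^k$, where $\tilde H^k \to \tilde Z_*^{-1}H^*$ (or $H^*$ after rescaling) and the noise covariance $\tilde B^k \to \tilde B^*$ with $\tilde B^*$ given by~\eqref{eq:B:imp} — this last identification is just the statement that the covariance of $\nabla_\theta\ell/|\nabla_\theta\ell|$ under $\tilde\nu_*$, minus the vanishing outer product $\nabla L(\thetab^*)\nabla L(\thetab^*)^T$, equals the integral in~\eqref{eq:B:imp}. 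Then the covariance $\tilde C^k = \EE_D\tilde\vartheta^k(\tilde\vartheta^k)^T$ satisfies the discrete Lyapunov recursion whose fixed point is~\eqref{eq:C:imp}, and the identity $\lim_{n\to\infty} n\EE_D[L(\tilde\thetab^k)-L(\bar\thetab^k)] = \tfrac12\tr[\tilde C^k\tilde H^k]$ follows from a second-order Taylor expansion of $L$ about $\bar\thetab^k$ together with $\EE_D\tilde\vartheta^k = 0$. Sending $k\to\infty$ and using Assumption~\ref{as:1} to get $\bar\thetab^k\to\thetab^*$, $\tilde H^k\to\tilde H^*$, $\tilde B^k\to\tilde B^*$, hence $\tilde C^k\to\tilde C^*$, yields~\eqref{eq:generr:imp}.

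The main obstacle, I expect, is not any of these steps individually — they are copies of the base case — but rather making the $\tilde Z_k^{-1}$ prefactor rigorous: one must verify that replacing the true GD flow by the $\tilde Z$-rescaled flow as the comparison sequence is legitimate (the two have the same fixed point $\thetab^*$ since $\nabla L(\thetab^*)=0$, and $\tilde Z_k$ is bounded away from $0$ and $\infty$ along the trajectory, which needs a mild regularity assumption on $\ell$), and that the constant then cancels correctly in the final trace formula. A secondary technical point is justifying the interchange of the $n\to\infty$ limit with the expectations and the Taylor remainders — this requires uniform integrability / moment bounds on $\tilde\vartheta_n^k$, exactly as in the proof of Proposition~\ref{th:1a}, and I would simply invoke the same hypotheses there. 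Everything else is routine linear-algebra manipulation of the discrete Lyapunov equation.
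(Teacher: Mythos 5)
Your proposal is correct and takes essentially the same route as the paper, which in fact gives no separate argument here beyond the remark that ``the proof of this proposition is similar to that of Proposition~\ref{th:1a}'': linearize~\eqref{eq:sgd:imp} around the deterministic flow, pass to the Ornstein--Uhlenbeck limit, and read off the stationary Lyapunov equation with the new noise covariance $\tilde B^*$. You go somewhat beyond the paper by explicitly flagging the $\tilde Z_k^{-1}$ drift normalization (the fact that the comparison sequence must be the $\tilde Z$-rescaled gradient flow, with $\tilde Z_*$ then absorbed into the learning rate so that~\eqref{eq:C:imp} appears with $H^*$ rather than $\tilde Z_*^{-1}H^*$), which is precisely the point the paper's one-line proof leaves implicit.
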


\noindent The proof of this proposition is similar to that of Proposition~\ref{th:1a}. For small $\alpha$, this shows again that the error will be controlled by $\tr \tilde B^*$, which is now trivially given by
\begin{equation}
    \label{eq:tr:B:imp}
    \tr \tilde B^* = 1
\end{equation}
This result may look surprising but it is a consequence of the fact that, by using~\eqref{eq:sgd:imp}   we have effectively absorbed in the learning rate the unknown factor $\EE_\nu |\nabla_{\thetab} \ell(\cdot,\thetab)|$ entering the weights $g(\xb)$ defined in~\eqref{eq:tilted:nu}. If we had not done this, $\tr \tilde B^*$ in~\eqref{eq:tr:B:imp} would be replaced by $|\EE_\nu |\nabla_{\thetab} \ell(\cdot,\thetab)||^2$; this provides a point of comparison with the scheme discussed in Proposition~\ref{th:1a}, since from~\eqref{eq:B} $\tr B^* = \EE_\nu |\nabla_{\thetab} \ell(\cdot,\thetab)|^2 \ge |\EE_\nu |\nabla_{\thetab} \ell(\cdot,\thetab)||^2$. Therefore we would reduce the variance. 

Coming back to the scheme defined by~\eqref{eq:sgd:imp}, one feature that makes it somewhat academic is that we still need to sample $\tilde \nu_k$: while this can in principle be done via the Metropolis-Hastings method or the Metropolis-adjusted Langevin algorithm, we have no guarantees that this sampling will be fast---for example, even if $\tilde \nu $ has a density $\rho(\xb)$ with respect to the Hausdorff measure on $\Omega$, there is no guarantee that its potential $-\log \rho(\xb)$ will be convex or even that it will have a single minimum. For these reasons, we instead implement the alternative active importance sampling strategy based on umbrella sampling and replica exchange which we deem more robust and more widely applicable.  

\section{Approximation of the committor with a neural network}
\label{app:rep}
\subsection{Representation}

 Neural networks (NN) offer flexibility to the representation and relative ease of optimization, making them a natural choice for a representation of the committor. For example, if we use a single hidden layer neural network with nonlinearity $\varphi$ (e.g., ReLU) passed through a thresholding function $\sigma$ (e.g., a sigmoid function, $\sigma(z) = 1/(1+e^{-z})$) to ensures that $q(\xb) \in [0,1],\ \forall \xb \in \RR^d$, this amounts to taking
\begin{equation}
  \label{eq:nn}
 q(\xb,\thetab) = \sigma \left[ \frac1n \sum_{i=1}^n \varphi(\xb, \thetab_i) \right]
\end{equation}
where we use $\thetab_i$ with $i=1,\ldots,n$ to denote the parameters in each neural units and $\thetab= (\thetab_1,\ldots,\thetab_n)$ to denote all of them collectively. In practice, the architecture of the neural network will be substantially more intricate than the single hidden layer network~\eqref{eq:nn}.


\subsection{Computing the gradients}

Optimization of the neural network representation of the committor~\eqref{eq:nn} by gradient descent (GD) requires estimating the gradient of the objective function with respect to the parameters. For example, if we use~\eqref{eq:nn} in the Lagrangian defined in~\eqref{eq:lag:com}, we have
\begin{equation}
\label{eq:grad:lag:com}
   \tfrac12 \grad_{\thetab_i} \mathcal{L}(\xb,q)  \equiv\tfrac12 \grad_{\thetab_i} \ell(\xb,\thetab) 
   = \grad_{\thetab_i} \grad_{\xb}  q \grad_{\xb}  q  +\lambda 
     q \grad_{\thetab_i}  q 1_A- \lambda 
     (1-q) \grad_{\thetab_i}  q 1_B 
\end{equation}
Noting that, with $\sigma(z) = 1/(1+e^{-z})$,
\begin{equation}
 \grad_{\xb} \sigma(f(\xb)) = \sigma(f(\xb))\left( 1-\sigma(f(\xb)) \right) \grad_{\xb} f(\xb)
\end{equation}
and similarly for $\grad_{\thetab}$ we can derive explicit expressions for the factors at the right hand side of~\eqref{eq:grad:lag:com}.  In particular, we see that
\begin{equation}
  \grad_{\xb} q(\xb,\thetab) =
  \frac1n q(\xb,\thetab)(1-q(\xb,\thetab))\sum_{i=1}^n \grad_{\xb}\phi(\xb)
  \grad_\phi\varphi(\phi(\xb),\thetab_i),
\end{equation}
\begin{equation}
 \grad_{\thetab_i} q(\xb,\thetab) = \frac1n
 q(\xb,\thetab)(1-q(\xb,\thetab)) \grad_{\thetab_i}\varphi(\phi(\xb),\thetab_i)
\end{equation}
and
\begin{equation}
  \begin{aligned}
    \grad_{\thetab_i} \grad_{\xb} q(\xb,\thetab) & = \frac1n
    \grad_{\thetab_i} q(\xb,\thetab)(1-2q(\xb,\thetab))\sum_{j=1}^n
    \grad_{\xb}\phi(\xb)
    \grad_\phi\varphi(\phi(\xb),\thetab_j)\\
    & + \frac1n q(\xb,\thetab)(1-q(\xb,\thetab))\grad_{\xb}\phi(\xb)
  \grad_\phi \grad_{\thetab_i}\varphi(\phi(\xb),\thetab_i).
  \end{aligned}
\end{equation}

\section{Alternative formulation of the committor and boundary conditions}
\label{app:charges}

The variational problem of determining the committor function can be reinterpreted via a solution to the following PDE~\cite{lu_exact_2014},
\begin{equation}
  L\tilde q = \tau e^{\beta V(\xb)} \left[ \delta(\xb-\ab) - \delta(\xb-\bb) \right].
  \label{eq:charges}
\end{equation}
where $\tau>0$ is a parameter, and $\delta(\xb-\ab)$ and $\delta(\xb-\bb)$ denote the Dirac delta distribution centered at $\ab$ and $\bb$ respectively.
Given a solution to \eqref{eq:charges}, it is straightforward to verify that the committor between the sets
\begin{equation}
  \nonumber
  \begin{aligned}
  A = \{ \xb | \tilde q(\xb) \leq \tilde q_- \} \ni \ab\\
  B = \{ \xb | \tilde q(\xb) \geq \tilde q_+ \}\ni \bb
\end{aligned}
\end{equation}
is given by
\begin{equation}
  q(\xb) = \frac{\tilde q(\xb) - \tilde q_-}{\tilde q_+ - \tilde q_-}
\end{equation}
for $\xb\in (A\cup B)^c$.

We can use the variational optimization algorithm Alg.\ref{alg:basic} to compute $\tilde q$ where we penalize the cost functional to obtain the loss function,
\begin{equation}
  C_{\lambda}[\tilde q] = C[\tilde q] + \tau  \left(\tilde q(\ab)  - \tilde q(\bb)\right) .
\end{equation}
This formulation offers several advantages compared to the formulation discussed in the main text.
First, because the range of $\tilde q$ is all of $\RR$, there is no need to use thresholding functions that could affect the magnitude of gradients and hence the rate of convergence of the optimization.
Secondly, to use the penalized objective of the main text, we must draw samples from the metastable states $A$ and $B$.
If those states are difficult to sample, the boundary conditions here require knowledge of only two points $\ab\in A$  and $\bb\in B$.

\end{document}